\documentclass[twoside]{article}

\usepackage[preprint]{aistats2027}
\usepackage[round]{natbib}
\usepackage{comment}
\usepackage{xcolor}
\usepackage[
    colorlinks=true,
    citecolor=blue,
    linkcolor=blue,
    urlcolor=blue
]{hyperref}

\usepackage{amsmath,amssymb,amsthm,mathtools,bm}
\usepackage{float}
\usepackage{graphicx,booktabs,microtype,url}
\newtheorem{theorem}{Theorem}
\newtheorem{proposition}{Proposition}
\newtheorem{corollary}{Corollary}
\newtheorem{remark}{Remark}
\theoremstyle{definition}\newtheorem{definition}{Definition}

\newcommand{\C}{\mathcal C}

\newcommand{\Var}{\operatorname{Var}}
\newcommand{\KL}{\operatorname{KL}}
\newcommand{\elpd}{\operatorname{elpd}}
\newcommand{\loo}{\mathrm{LOO}}
\newcommand{\pc}{\mathrm{PC}}

\begin{document}

%

%

\twocolumn[
\aistatstitle{Predictive-Context LOO: Bayesian Model Comparison with Pretrained 
Predictive Learners}
\aistatsauthor{Hans Montcho \And Maurizio Filippone}
\aistatsaddress{KAUST, Saudi Arabia}
]

\begin{abstract}

Leave-one-out (LOO) cross-validation, a standard tool for Bayesian model 
assessment,  evaluates each observation using a LOO predictive 
distribution derived from the model itself.
In this work, we introduce an alternative construction in which posterior
predictive simulation connects the Bayesian model to an external 
pretrained predictive learner.
Our method, Predictive-Context LOO (PC-LOO), fits a Bayesian model once, 
generates posterior-predictive synthetic data, and uses them as context for 
the learner to approximate each LOO predictive density.
Under an oracle construction, averaging contexts generated from the LOO 
posterior recovers the exact LOO predictive density.
Practical PC-LOO replaces the oracle predictor by the pretrained learner
and the LOO posterior by the full posterior. We quantify these errors 
through learner compatibility $\C_i$ and deletion influence $V_i$, obtaining  
$H^2(p_i^{\loo},q_i^{\pc})\leq \C_i + V_i.$
The  identity also holds for any fixed  subset of $M$ context points,
allowing  $M \ll n-1$ as the learner's context budget,
and scalability to large datasets.
Experiments with TabICL support this decomposition, deletion influence decays
approximately as $n^{-1}$, while learner compatibility can remain limiting 
at large $n$. PC-LOO remains accurate with reduced context sizes,
while model comparison accuracy is governed by differences in 
approximation error across models rather than their absolute errors.
More broadly, PC-LOO establishes posterior predictive simulation as a
principled interface for integrating pretrained predictive learners 
into Bayesian workflows.

\end{abstract}

\section{Introduction}
\label{sec:introduction}

Leave-one-out (LOO) cross-validation is a standard tool for Bayesian 
model comparison 
\citep{geisser1979predictive, vehtari2012survey}.
For observations $D=\{(x_j,y_j)\}_{j=1}^n$, LOO evaluates each model 
through the predictive distributions $p(y_i\mid x_i,D_{-i})$, 
where $D_{-i} = D \setminus \{(x_i, y_i)\}$ denotes the data
with observation $i$ removed. 
Traditionally, both exact LOO and its  approximations \citep{vehtari2024pareto}
construct the LOO predictive distribution from the Bayesian model itself, by 
first correcting posterior deviations in the parameter space,
due to leaving a datum out, and then propagating the uncertainty into the
predictive space. In this work, we propose a new perspective to
Bayesian model assessment by directly targeting the predictive space
through synthetic data. We represent the Bayesian model's predictive 
information through synthetic data and transfer it,  as predictive 
context, to an external pretrained predictive learner to approximate 
LOO predictive  distributions.

More broadly,  this
predictive space view only requires  a synthetic 
predictive context and a corresponding prediction task, and its
construction is readily extensible.
As predictive learners evolve \citep{pmlr-v267-reuter25a, 
pmlr-v267-muller25d} and  become more diverse 
the same framework will accommodate structured, 
dependent, and joint prediction problems, 
including leave-group, block or cluster, causal prediction, 
without redesigning the underlying methodology for each setting;
while importance sampling approaches may require problem specific derivations.

Posterior predictive simulation provides a natural interface for this
transfer by representing a fitted Bayesian model through synthetic data.
Pretrained predictive learners, benefiting from in-context learning \citep{dong2024survey},
can then use such labeled data as context to
form predictive distributions without  refitting. 
Examples include TabPFN and TabICL in tabular data, 
\citep{hollmann2022tabpfn,tabicl},  TimePFN for times series
\citep{taga2025timepfn}, SurvPFN for survival data \citep{bohm2026survpfn}, and
CausalFM in causal inference settings \citep{ma2026foundation}, just to cite a few.

Related simulation based approaches amortize Bayesian model comparison by
training neural networks to estimate model probabilities or Bayes 
factors directly
\citep{radev2021amortized,jeffrey2024evidence}.
Here, instead of training a model-specific comparison network, we use
posterior predictive data to communicate a fitted Bayesian model to an
external pretrained predictive learner, motivating 
Predictive-Context LOO (PC-LOO).

More specifically, PC-LOO fits the Bayesian model once to the full dataset and 
generates posterior predictive replicates 
$\widetilde D^{(s)}=\{(x_j,\tilde y_j^{(s)})\}_{j=1}^n.$ For a target
observation $i$, 
$\widetilde D_i^{(s)}= (X_{-i},\widetilde Y_{-i}^{(s)})$ is supplied
as context to the predictive learner with $x_i$  as  query
point.  Averaging the resulting predictive distributions 
over replicated contexts defines the PC-LOO approximation to
$p(y_i\mid x_i,D_{-i})$. Importantly,  observed responses are only used
to evaluate the  predictive distribution, never as learner context.

This  raises two statistical questions. First, does 
averaging predictions over posterior predictive contexts recover
the Bayesian predictive distribution exactly? 
Second, practical PC-LOO replaces the oracle
conditional predictor by a pretrained learner and the LOO posterior
$p(\theta\mid D_{-i})$ by the full posterior
$p(\theta\mid D)$. How do these substitutions affect the
LOO approximation?

The first question has an exact answer: when synthetic contexts are
generated from the LOO posterior and predictions use the corresponding
Bayesian conditional predictor, averaging over contexts recovers the
exact LOO predictive distribution. Thus, the predictive-context
representation itself introduces no approximation error.
For the second question, we characterize
the two departures of practical PC-LOO from this construction by learner
compatibility $\C_i$ and deletion influence $V_i$, and show that they
jointly control the finite-sample approximation error.

The exact representation also holds for any fixed context subset of size
$M$, separating the Bayesian model's data size $n$ from the learner's
context budget $M$; thus, $M\ll n-1$ introduces no oracle subsampling
bias. In practical PC-LOO, its effect enters through the learner's ability
to recover predictive structure from a smaller context, and  provides
a complementary approach to existing subsampling LOO
methods  for large-data \citep{magnusson2020leave}.

For model comparison, the relevant quantity is differential rather than
absolute approximation error: the error in a PC-LOO score difference is
the difference between the approximation errors of the competing models.
Thus, PC-LOO may preserve model comparisons even when individual LOO
predictions are approximate.

We organize our empirical study around the theory 
rather than providing a broad predictive benchmark.
Controlled Gaussian experiments examine the exact representation,
learner compatibility, deletion influence, context size, and model comparison;
real tabular problems study context scaling, large-$n$ behavior, and
accuracy relative to exact  and approximate LOO.

\noindent\textbf{Our contributions are:}

(1) We introduce a predictive-context representation of Bayesian
prediction through posterior predictive simulation and prove its exactness
under the corresponding oracle predictor.

(2)  We characterize practical PC-LOO through learner compatibility
$\C_i$ and deletion influence $V_i$, derive a finite-sample error bound,
show that $V_i=O_p(n^{-1})$ under regular fixed-dimensional conditions,
and establish consistency when the learner is predictively consistent
as its context grows.

(3) We show that PC-LOO model comparison depends on differential rather
than absolute approximation error, and empirically evaluate the theory
using controlled Gaussian experiments and TabICL on tabular problems.

\begin{figure*}[t]
    \centering
    \includegraphics[width=\textwidth,
    trim={0 20 0 20},
    clip]{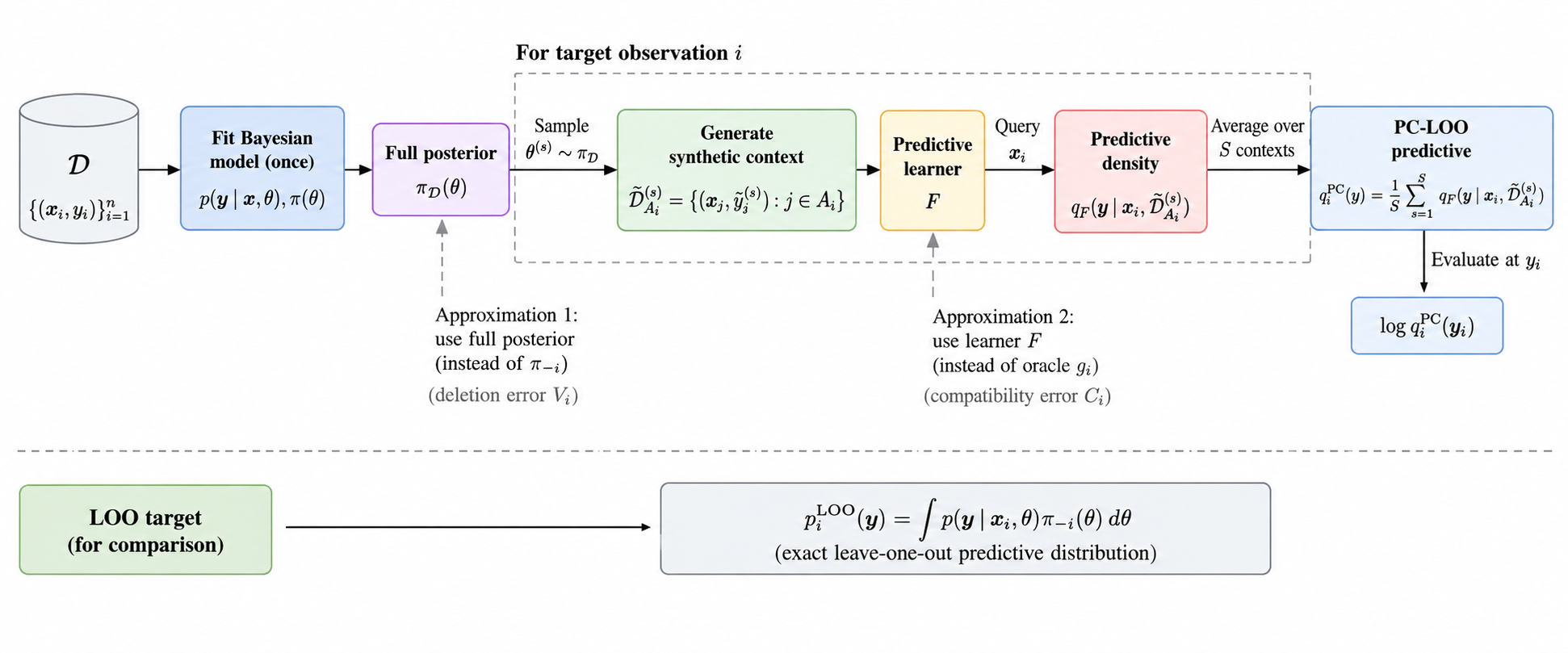}
    \caption{\textbf{PC-LOO workflow.}
    The Bayesian model is fitted once to the full data. For target observation $i$,
    draws from the full posterior generate synthetic response contexts on
    $A_i \subseteq \{1,\ldots,n\}\setminus\{i\}$.
    A predictive learner $F$ predicts at $x_i$ from each context, and the resulting
    densities are averaged and evaluated at $y_i$.
    The approximation to exact LOO arises from using the full rather than LOO
    posterior and from replacing the oracle conditional predictor by $F$.}
    \label{fig:pcloo-workflow}
\end{figure*}

\section{Bayesian LOO and notation}
\label{sec:bayesian-loo}
Let $D=\{(x_j,y_j)\}_{j=1}^n$ denote observations from a Bayesian
conditional model $p(y\mid x,\theta)$ with prior $\pi(\theta)$.
We write
$$
\pi_D(\theta)=p(\theta\mid D), \qquad \pi_{-i}(\theta)=p(\theta\mid D_{-i}),
$$
for the full-data and leave-one-out posteriors, where
$D_{-i}=D\setminus\{(x_i,y_i)\}$. The LOO predictive distribution for
observation $i$ is
$$
p_i^{\loo}(y)=\int p(y\mid x_i,\theta)\,\pi_{-i}(\theta)\,d\theta ,
$$
with pointwise log predictive,  and corresponding  expected log predictive
$$
\ell_i^{\loo}=\log p_i^{\loo}(y_i), \qquad 
\elpd_{\loo}=\sum_{i=1}^n \ell_i^{\loo}.
$$
For  models $\mathcal M_0$ and $\mathcal M_1$, define
$$
\Delta_{\loo} = \elpd_{\loo}(\mathcal M_1) -\elpd_{\loo}(\mathcal M_0).
$$

Exact LOO requires the posteriors $\pi_{-i}$, but 
Pareto Smooth Importance Sampling (PSIS-LOO) \citep{vehtari2024pareto}
instead uses importance sampling from $\pi_D$, with raw ratios
$$
r_i^{(s)} \propto \frac{1}{p(y_i\mid x_i,\theta^{(s)})}
$$
and Pareto-smooths their upper tail; the fitted shape parameter
$\hat{k}_i$ provides a diagnostic of importance-sampling reliability
.

\section{Predictive-context Leave-One-Out}
\label{sec:construction}

In this section, we formalize the predictive-context mechanism 
underlying PC-LOO,  illustrated in Figure~\ref{fig:pcloo-workflow}.
Here, the construction is defined for a generic parameter
distribution and predictive learner, allowing us to separate the exact
representation induced by synthetic contexts from the approximations
introduced in practical PC-LOO.

\subsection{Bayesian predictives and synthetic contexts}

\begin{definition}[Predictive learner]
Let $\mathcal X$ and $\mathcal Y$ denote the covariate and response
spaces, and $\mathcal Q(\mathcal Y)$ the class of densities on
$\mathcal Y$ with respect to a fixed dominating measure.
A predictive learner is a mapping
$$
F:
\mathcal X \times
\bigcup_{m\geq 0}(\mathcal X\times\mathcal Y)^m
\longrightarrow
\mathcal Q(\mathcal Y).
$$
For a query $x$ and finite context $\mathcal D_c$, denote the resulting
predictive density by $q_F(\cdot\mid x,\mathcal D_c)$.
\end{definition}

For any parameter density $\rho(\theta)$, define
$$
p_i^\rho(y)
=
\int p(y\mid x_i,\theta)\rho(\theta)\,d\theta .
$$
In particular, $\rho=\pi_{-i}$ gives the exact LOO predictive
$p_i^{\loo}$, while $\rho=\pi_D$ is available from the full-data fit.

\begin{definition}[Synthetic predictive context]
Fix a target $i$, context indices
$$
A_i\subseteq\{1,\ldots,n\}\setminus\{i\},
\qquad |A_i|=M,
$$
and parameter density $\rho(\theta)$. Generate
$$
\theta^{(s)}\sim\rho(\theta),\qquad
\widetilde Y_j^{(s)}\mid\theta^{(s)}
\sim p(\cdot\mid x_j,\theta^{(s)}),
\qquad j\in A_i,
$$
conditionally independently given $\theta^{(s)}$. The resulting
synthetic predictive context is
$$
\widetilde D_{A_i}^{(s)}
=
\{(x_j,\widetilde Y_j^{(s)}):j\in A_i\}.
$$
\end{definition}

Thus, the learner receives only model-generated responses as context,
never the observed responses in $A_i$.

\subsection{Predictive distributions induced by synthetic contexts}

For fixed $i$, $A_i$, and $\rho$, the construction above induces the
joint density
$$
m_{i,A_i}^{\rho}
\left(y,\widetilde y_{A_i}\right)
=
\int
p(y\mid x_i,\theta)
\prod_{j\in A_i}
p(\widetilde y_j\mid x_j,\theta)
\rho(\theta)\,d\theta,
$$
where
$\widetilde y_{A_i}=\{\widetilde y_j:j\in A_i\}$, with context marginal
$$
m_{A_i}^{\rho}
\left(\widetilde y_{A_i}\right)
=
\int
\prod_{j\in A_i}
p(\widetilde y_j\mid x_j,\theta)
\rho(\theta)\,d\theta .
$$

The conditional predictive density induced by this synthetic-data
experiment is
$$
g_{i,A_i}^{\rho}
\left(y\mid \widetilde y_{A_i}\right)=
\frac{m_{i,A_i}^{\rho}\left(y,\widetilde y_{A_i}\right)}
{m_{A_i}^{\rho}\left(\widetilde y_{A_i}\right)},
$$
whenever $m_{A_i}^{\rho}(\widetilde y_{A_i})>0$. We refer to 
$g_{i,A_i}^{\rho}$ as the oracle predictive-context density.

The synthetic responses in the context provide
information about the common parameter $\theta$. The oracle predictor
updates $\rho$ using this synthetic context and then forms the
Bayesian predictive density at $x_i$. The external learner $F$ 
is asked to solve the same predictive task from $\widetilde D_{A_i}$ alone,
without direct access to $\rho$, $\theta$, or the Bayesian model.

\subsection{Exact predictive-context identity}

Conditioning on a synthetic context generally changes the predictive
distribution at $x_i$, because the context carries information about the 
parameter draw used to generate it. However, 
averaging the oracle conditional predictive density over 
the distribution of synthetic contexts recovers the original 
predictive density induced by $\rho$.

\begin{theorem}[Predictive-context identity]
\label{theo:predictive-context-identity}
Fix a target index $i$, a context index set
$A_i\subseteq\{1,\ldots,n\}\setminus\{i\}$, and a parameter density
$\rho(\theta)$. Then
\begin{equation}
\int
g_{i,A_i}^{\rho}
\left(y\mid \widetilde y_{A_i}\right)
m_{A_i}^{\rho}
\left(\widetilde y_{A_i}\right)
\,d\widetilde y_{A_i}
=
p_i^\rho(y).
\label{eq:predictive-context-identity}
\end{equation}
\end{theorem}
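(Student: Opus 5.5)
The identity is the law of total probability for the joint density $m_{i,A_i}^{\rho}$. The plan is to cancel the normalizing denominator against the context marginal, which leaves a marginalization of the joint density over the synthetic responses. Tonelli's theorem then lets us integrate the synthetic responses out before integrating over $\theta$.

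First, handle the set where the conditional is undefined. Let $N=\{\widetilde y_{A_i}: m_{A_i}^{\rho}(\widetilde y_{A_i})=0\}$. On $N$ the integrand $g\,m$ should be read as $0$, whatever convention defines $g$ there. We also need $m_{i,A_i}^{\rho}(y,\cdot)$ to vanish almost everywhere on $N$. This follows from Tonelli:
\[
\int_{\mathcal Y} m_{i,A_i}^{\rho}(y,\widetilde y_{A_i})\,dy = m_{A_i}^{\rho}(\widetilde y_{A_i}),
\]
so $\int_N\int m_{i,A_i}^{\rho}\,dy\,d\widetilde y_{A_i}=0$. Hence the $N$-section is null for almost every $y$, which is all a density identity requires.

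Off $N$, we have $g_{i,A_i}^{\rho}\,m_{A_i}^{\rho}=m_{i,A_i}^{\rho}$ by definition. The left side of \eqref{eq:predictive-context-identity} therefore equals $\int m_{i,A_i}^{\rho}(y,\widetilde y_{A_i})\,d\widetilde y_{A_i}$ for almost every $y$. Substitute the defining integral over $\theta$. Since the integrand $p(y\mid x_i,\theta)\prod_{j\in A_i}p(\widetilde y_j\mid x_j,\theta)\rho(\theta)$ is nonnegative, Tonelli lets us swap the $\widetilde y_{A_i}$ and $\theta$ integrals. Each factor $\int p(\widetilde y_j\mid x_j,\theta)\,d\widetilde y_j$ equals $1$ because it integrates a density. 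What remains is $\int p(y\mid x_i,\theta)\rho(\theta)\,d\theta=p_i^\rho(y)$.

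The only delicate point is the measure-theoretic treatment of the null set $N$, together with the justification for exchanging integrals. Nonnegativity handles the exchange, and the argument above handles $N$. The case $A_i=\emptyset$ is trivial: then $g=p_i^\rho$.
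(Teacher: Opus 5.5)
Your proposal is correct and follows the paper's proof: cancel the denominator so the integrand becomes $m_{i,A_i}^{\rho}$, then use Tonelli to integrate out the synthetic responses, each of which integrates to one. Your handling of the set where $m_{A_i}^{\rho}=0$ is extra care that the paper omits. It can even be made pointwise in $y$: if $m_{A_i}^{\rho}(\widetilde y_{A_i})=0$, the nonnegative $\theta$-integrand vanishes for almost every $\theta$, so $m_{i,A_i}^{\rho}(y,\widetilde y_{A_i})=0$ for every $y$, not just almost every $y$.
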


The identity follows directly by marginalizing the joint synthetic-data
density over the context responses. Thus,  synthetic 
contexts does not alter the marginal predictive, i.e. introduce
no approximation error under the oracle predictor.

When $\rho=\pi_{-i}$ is  the LOO posterior, 
Theorem~\ref{theo:predictive-context-identity} gives exact LOO prediction.

\begin{corollary}[LOO predictive-context identity]
\label{cor:loo-predictive-context}
Let $\rho=\pi_{-i}$. For any fixed 
$A_i\subseteq\{1,\ldots,n\}\setminus\{i\}$,
\begin{equation}
\mathbb E_{\widetilde Y_{A_i}\sim m_{A_i}^{\pi_{-i}}}
\left[
g_{i,A_i}^{\pi_{-i}}
\left(
y\mid \widetilde Y_{A_i}
\right)
\right]=
p_i^{\loo}(y).
\label{eq:loo-predictive-context-identity}
\end{equation}
\end{corollary}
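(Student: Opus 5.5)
The corollary is a direct specialization of Theorem~\ref{theo:predictive-context-identity}, so the plan is to rewrite its left-hand side in the integral form of \eqref{eq:predictive-context-identity} and then substitute $\rho=\pi_{-i}$. First, I will check that $\pi_{-i}$ is an admissible parameter density. It is a proper probability density in $\theta$ whenever the LOO posterior exists, which is implicit in the definition of $p_i^{\loo}$. The theorem places no other requirement on $\rho$, and $A_i$ is an arbitrary fixed subset of $\{1,\ldots,n\}\setminus\{i\}$, exactly as the theorem allows.

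Next, I will write the expectation explicitly. The vector $\widetilde Y_{A_i}$ is distributed according to $m_{A_i}^{\pi_{-i}}$, which is the context marginal from the synthetic-context construction with $\rho=\pi_{-i}$. Hence
\[
\mathbb E_{\widetilde Y_{A_i}\sim m_{A_i}^{\pi_{-i}}}\!\left[g_{i,A_i}^{\pi_{-i}}(y\mid \widetilde Y_{A_i})\right]
=\int g_{i,A_i}^{\pi_{-i}}(y\mid \widetilde y_{A_i})\, m_{A_i}^{\pi_{-i}}(\widetilde y_{A_i})\,d\widetilde y_{A_i}.
\]
Here $g$ only needs to be defined on the set $\{m_{A_i}^{\pi_{-i}}>0\}$, and this set carries full mass under the sampling distribution. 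By Theorem~\ref{theo:predictive-context-identity}, the right-hand side equals $p_i^{\pi_{-i}}(y)$. By definition, $p_i^{\pi_{-i}}(y)=\int p(y\mid x_i,\theta)\pi_{-i}(\theta)\,d\theta=p_i^{\loo}(y)$, which gives \eqref{eq:loo-predictive-context-identity}.

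The only point that needs care is the null set where the context marginal vanishes and $g$ is undefined. I would handle it the way the theorem implicitly does. On that set the integrand $g\cdot m$ is read as $m_{i,A_i}^{\rho}(y,\widetilde y_{A_i})$, and this is $0$ there because $0\le m_{i,A_i}^{\rho}(y,\cdot)$ and $\int m_{i,A_i}^{\rho}(y,\cdot)\,dy=m_{A_i}^{\rho}(\cdot)=0$, so $m_{i,A_i}^{\rho}(y,\cdot)=0$ for almost every $y$. Alternatively, the set can simply be excluded because it has zero probability. Should a self-contained argument be preferred over citing the theorem, I would reproduce the underlying Tonelli step directly. Since $g\,m=m_{i,A_i}^{\pi_{-i}}$ and all integrands are nonnegative, the order of the $\theta$- and $\widetilde y_{A_i}$-integrals can be exchanged. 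Each $p(\widetilde y_j\mid x_j,\theta)$ then integrates to one, which leaves $\int p(y\mid x_i,\theta)\pi_{-i}(\theta)\,d\theta$.
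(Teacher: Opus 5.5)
Your proposal is correct and matches the paper's proof, which simply applies Theorem~\ref{theo:predictive-context-identity} with $\rho=\pi_{-i}$ and uses the definition $p_i^{\pi_{-i}}=p_i^{\loo}$. Your treatment of the set $\{m_{A_i}^{\pi_{-i}}=0\}$, where $g$ is undefined, goes slightly beyond the paper, which leaves that point implicit; it is a harmless refinement.
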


\begin{remark}[Context budget]
The identity  holds for every fixed  $A_i$, so the oracle
representation does not require all $n-1$ non-held-out observations. 
Thus, the context size $M=|A_i|$ can be chosen as a learner context budget,
independently of the sample size used to construct $\rho$.
If $A_i$ is sampled independently of the synthetic responses, the
identity also holds after averaging over the distribution of $A_i$.
\end{remark}

Practical PC-LOO makes two substitutions: it generates contexts from
the available full posterior $\pi_D$ rather than $\pi_{-i}$, and replaces
the oracle predictor $g_{i,A_i}^{\pi_{-i}}$ by the external learner $F$.

\subsection{Practical PC-LOO}
For a parameter density $\rho$, synthetic context 
$\widetilde D_{A_i}$, and  learner's conditional  density 
$q_F\left(y\mid x_i,\widetilde D_{A_i}\right)$, define the 
corresponding population predictive-context density by
\begin{equation}
q_{i,A_i}^{\rho,F}(y)=
\mathbb E_{\widetilde Y_{A_i}\sim m_{A_i}^{\rho}}
\left[q_F\left(y\mid x_i,\widetilde D_{A_i}\right)\right].
\label{eq:learner-mixture}
\end{equation}
When the learner $F$ is fixed, we suppress it from the notation and
write $q_{i,A_i}^{\rho}$.

\begin{definition}[Population PC-LOO predictive density]
Using  the full posterior density
$\rho=\pi_D$,  we  define the population PC-LOO predictive
density as
\begin{equation}
q_{i,A_i}^{\mathrm{PC}}(y):=q_{i,A_i}^{\pi_D,F}(y).
\label{eq:population-pcloo} 
\end{equation}
\end{definition}

To approximate \eqref{eq:population-pcloo}, we generate $S$ independent
synthetic contexts under $\pi_D$,
$\widetilde D_{A_i}^{(1)},\ldots, \widetilde D_{A_i}^{(S)},$
and we define the PC-LOO estimator
\begin{equation}
\widehat q_{i,A_i,S}^{\mathrm{PC}}(y)
=\frac{1}{S}\sum_{s=1}^Sq_F\left(y\mid x_i,\widetilde D_{A_i}^{(s)}
\right).
\label{eq:pcloo-estimator}
\end{equation}
The pointwise and aggregate PC-LOO scores are then defined as
\begin{equation}
\widehat\ell_i^{\mathrm{PC}}
=
\log
\widehat q_{i,A_i,S}^{\mathrm{PC}}(y_i),
\qquad
\operatorname{elpd}_{\mathrm{PC}}
=
\sum_{i=1}^n
\widehat\ell_i^{\mathrm{PC}}.
\label{eq:elpd-pcloo-estimator}
\end{equation}

Equation~\eqref{eq:pcloo-estimator} introduces finite-$S$ Monte Carlo
error. Even as $S\to\infty$, however, $q_{i,A_i}^{\mathrm{PC}}$ need not
equal $p_i^{\mathrm{LOO}}$: the external learner may differ from the
oracle predictor, and contexts are generated under $\pi_D$ rather than
$\pi_{-i}$. The next section quantifies these two sources of population
approximation error.

\section{Approximation theory}
\label{sec:theory}

We quantify the population discrepancy between
$q_{i,A_i}^{\mathrm{PC}}$ and $p_i^{\loo}$ using squared Hellinger
distance,
$$
H^2(p,q)
=
\frac12\int(\sqrt{p}-\sqrt{q})^2\,dy
=
1-\int\sqrt{pq}\,dy,
$$
with $0\leq H^2(p,q)\leq1$. 
Intuitively, the error in the approximation  is due to the predictive capability 
of the learner, and  the use of the full rather than the LOO posterior. 
These two components suggest a triangle-inequality argument to control the overall
error, therefore motivating the Hellinger distance. Also, since the 
the squared Hellinger distance is upper-bounded by the total variation (TV) distance,
the results are automatically valid for TV. Proofs are provided in the supplement.

\subsection{Learner compatibility}
The discrepancy between the oracle predictor and the external learner
motivates the following definition.

\begin{definition}[Learner compatibility]
For a target index $i$, context  set $A_i$, parameter density
$\rho$, and  learner $F$, define
\begin{equation}
\C_{i,A_i}^{\rho,F}=
\mathbb E_{\widetilde Y_{A_i}\sim m_{A_i}^{\rho}}
\left[
\operatorname{KL}
\left\{
g_{i,A_i}^{\rho}(\cdot\mid\widetilde Y_{A_i})
\,
\big\|
\,
q_F(\cdot\mid x_i,\widetilde D_{A_i})
\right\}
\right].
\label{eq:compatibility}
\end{equation}
\end{definition}

Compatibily measures conditional learner error, whereas 
PC-LOO depends on the context-averaged  densities,
$ p_i^\rho = \mathbb E_{\widetilde Y_{A_i}}
\left[g_{i,A_i}^{\rho}(\cdot\mid\widetilde Y_{A_i})\right]$ and
$q_{i,A_i}^{\rho,F}=\mathbb E_{\widetilde Y_{A_i}}
\left[q_F(\cdot\mid x_i,\widetilde D_{A_i})\right].$
The following result shows that $\C_{i,A_i}^{\rho,F}$ also 
controls their discrepancy.

\begin{proposition}[Marginal compatibility bound]
\label{prop:marginal-compatibility}
For any  $i$,  $A_i$, $\rho$, and $F$,

\begin{equation}
\operatorname{KL}
\left(
p_i^\rho
\,\big\|\,
q_{i,A_i}^{\rho,F}
\right)
\leq
\mathcal \C_{i,A_i}^{\rho,F}.
\label{eq:compatibility-kl-bound}
\end{equation}
Consequently,
\begin{equation}
H^2
\left(
p_i^\rho,
q_{i,A_i}^{\rho,F}
\right)
\leq
\frac{1}{2}
\mathcal \C_{i,A_i}^{\rho,F}.
\label{eq:compatibility-hellinger-bound}
\end{equation}\end{proposition}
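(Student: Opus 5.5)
The plan is to invoke the joint convexity of the Kullback--Leibler divergence, after first rewriting both sides of \eqref{eq:compatibility-kl-bound} as mixtures over the same mixing distribution.

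By Theorem~\ref{theo:predictive-context-identity}, $p_i^\rho(\cdot)=\mathbb E_{\widetilde Y_{A_i}\sim m_{A_i}^{\rho}}[g_{i,A_i}^{\rho}(\cdot\mid\widetilde Y_{A_i})]$. By definition \eqref{eq:learner-mixture}, $q_{i,A_i}^{\rho,F}(\cdot)=\mathbb E_{\widetilde Y_{A_i}\sim m_{A_i}^{\rho}}[q_F(\cdot\mid x_i,\widetilde D_{A_i})]$. So $p_i^\rho$ and $q_{i,A_i}^{\rho,F}$ are mixtures, indexed by $\widetilde y_{A_i}$, of $g$ and $q_F$ respectively, with the common mixing law $m_{A_i}^{\rho}$.

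Joint convexity of $\KL$ in the pair of arguments, extended from finite to general mixtures, then gives
\[
\KL\Bigl(\textstyle\int g\,dm \,\Big\|\, \int q_F\,dm\Bigr)\le \int \KL(g\,\|\,q_F)\,dm = \C_{i,A_i}^{\rho,F}.
\]
To establish the general-mixture version directly, I would use the log-sum inequality pointwise in $y$:
\[
\Bigl(\int g\,dm\Bigr)\log\frac{\int g\,dm}{\int q_F\,dm}\le \int g\log\frac{g}{q_F}\,dm .
\]
Integrating over $y$ and applying Tonelli then yields the bound. The log-sum inequality itself follows from convexity of $(a,b)\mapsto a\log(a/b)$, i.e. Jensen applied to the perspective function.

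An alternative route is the chain rule for KL on the joint laws $(\widetilde Y,Y)$, namely $m\otimes g$ versus $m\otimes q_F$. The joint KL equals $\C$, and the data-processing inequality (marginalizing out $\widetilde Y$) gives the result. I would present this version, since it avoids integrability fuss.

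The main obstacle is measure-theoretic care. The integrand $\int g\log(g/q_F)$ may be $+\infty$, but then the bound is trivial. The chain-rule and data-processing statements hold for nonnegative extended-valued KL without further assumptions. I also need the joint density to be well defined where $m>0$, which holds by the definition of $g$.

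For \eqref{eq:compatibility-hellinger-bound}, I would use the standard inequality $H^2(p,q)\le \tfrac12\KL(p\|q)$ under the paper's normalization $H^2=1-\int\sqrt{pq}$. Its proof runs as follows:
\[
\KL(p\|q)=-2\int p\log\sqrt{q/p}\ \ge\ -2\log\int\sqrt{pq}\ \ge\ 2\Bigl(1-\int\sqrt{pq}\Bigr)=2H^2 .
\]
The first inequality is Jensen's, and the second uses $-\log x\ge 1-x$. Combining this with \eqref{eq:compatibility-kl-bound} finishes the proof.
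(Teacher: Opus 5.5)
Your proposal is correct, and the route you say you would present is essentially the paper's proof. The paper takes the joint laws $m_{A_i}^{\rho}\otimes g$ and $m_{A_i}^{\rho}\otimes q_F$, observes via the chain rule that their $\KL$ equals $\C_{i,A_i}^{\rho,F}$, applies data processing under marginalization of $\widetilde Y_{A_i}$, and finishes with $2H^2\le\KL$; your log-sum/joint-convexity sketch is an equivalent alternative, and your Jensen derivation of $2H^2\le\KL$ proves a step the paper only cites.
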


The first bound follows from  data-processing inequality
for  KL divergence together with the predictive-context identity; 
the second follows from $2H^2(p,q)\leq\operatorname{KL}(p\|q)$.

For practical PC-LOO, contexts are generated under the full posterior,
$\rho = \pi_D$;  define $\C_{i,A_i} = \C_{i,A_i}^{\pi_D,F}$, and 
Proposition~\ref{prop:marginal-compatibility} then
gives
\begin{equation}
H^2\left(
p_i^{\pi_D},
q_{i,A_i}^{\mathrm{PC}}
\right)
\leq
\frac{1}{2}\C_{i,A_i}.
\label{eq:practical-compatibility-bound}
\end{equation}

Thus, $\C_i$ also controls learner error under the full posterior $\pi_D$.
The remaining discrepancy from exact LOO 
arises from using $\pi_D$ instead of $\pi_{-i}$,
induced by including observation $i$ in the posterior. We quantify this
posterior perturbation next.

\subsection{Deletion influence}

Let $L_i(\theta)=p(y_i\mid x_i,\theta)$
denote the likelihood contribution of observation $i$, and let
$$
\mu_i
=\mathbb E_{\pi_{-i}}\!\left[L_i(\theta)\right]=p_i^{\loo}(y_i).
$$
\begin{definition}[Deletion influence]
We define the deletion influence of observation $i$ as
\begin{equation}
V_i
=
\frac{
\operatorname{Var}_{\pi_{-i}}
\!\left\{L_i(\theta)\right\}
}{
\mu_i^2
}.
\label{eq:deletion-influence}
\end{equation}
\end{definition}

Unlike the PSIS Pareto-$\hat{k}$, which diagnoses the tail behavior of
importance ratios, $V_i$ directly measures the posterior perturbation
induced by observation $i$. Indeed, $V_i$ can be seen as a discrepancy
between the parameter posteriors, since

$$
V_i
=
\mathbb E_{\pi_{-i}}
\left[
\left\{
\frac{\pi_D(\theta)}{\pi_{-i}(\theta)}-1
\right\}^2
\right]
=
\chi^2(\pi_D\|\pi_{-i}).
$$
The following result propagates this posterior discrepancy to the
predictive distributions.

\begin{proposition}[Predictive deletion bound]
\label{prop:predictive-deletion}
For every target $i$,
\begin{equation}
H^2\left(p_i^{\pi_{-i}},p_i^{\pi_D}\right)
\leq
\frac12 V_i.
\label{eq:predictive-deletion-bound}
\end{equation}
\end{proposition}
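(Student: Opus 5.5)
Our plan is to pass from predictive densities back to parameter densities by data processing, and then to bound the parameter-level Hellinger distance by the $\chi^2$ identity for $V_i$ stated just before the proposition. Both $p_i^{\pi_{-i}}$ and $p_i^{\pi_D}$ are images of a parameter density under the same Markov kernel $\theta\mapsto p(\cdot\mid x_i,\theta)$, and the squared Hellinger distance is an $f$-divergence (with $f(t)=\tfrac12(\sqrt t-1)^2$). The data-processing inequality therefore gives
\begin{equation*}
H^2\left(p_i^{\pi_{-i}},p_i^{\pi_D}\right)\le H^2(\pi_{-i},\pi_D).
\end{equation*}
To make this self-contained, we can instead use the joint convexity of $(a,b)\mapsto(\sqrt a-\sqrt b)^2$. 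Concretely, by Cauchy--Schwarz,
\begin{equation*}
\int\sqrt{p(y\mid x_i,\theta)\pi_{-i}(\theta)}\sqrt{p(y\mid x_i,\theta)\pi_D(\theta)}\,d\theta\le\sqrt{p_i^{\pi_{-i}}(y)\,p_i^{\pi_D}(y)}.
\end{equation*}
Integrating over $y$ yields $\int\sqrt{\pi_{-i}\pi_D}\,d\theta\le\int\sqrt{p_i^{\pi_{-i}}p_i^{\pi_D}}\,dy$, which is exactly the claimed monotonicity.

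The second step is to bound $H^2(\pi_{-i},\pi_D)$ by $\tfrac12\chi^2(\pi_D\|\pi_{-i})$. Write $w(\theta)=\pi_D(\theta)/\pi_{-i}(\theta)=L_i(\theta)/\mu_i$. This holds by Bayes' rule, since $\pi_D\propto L_i\,\pi_{-i}$ with normalizer $\mu_i$, and it also justifies the displayed identity $V_i=\mathbb E_{\pi_{-i}}[(w-1)^2]$. Then
\begin{equation*}
H^2(\pi_{-i},\pi_D)=\tfrac12\mathbb E_{\pi_{-i}}\left[(\sqrt w-1)^2\right].
\end{equation*}
We use the elementary inequality $(\sqrt w-1)^2=(w-1)^2/(\sqrt w+1)^2\le (w-1)^2$, valid for $w\ge0$. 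Taking expectations gives $H^2(\pi_{-i},\pi_D)\le\tfrac12\mathbb E_{\pi_{-i}}[(w-1)^2]=\tfrac12 V_i$, and combining the two steps gives the proposition. This inequality is loose by roughly a factor of four near $w=1$, but that slack is harmless for the stated constant.

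The main point requiring care is the support and measure-theoretic bookkeeping. The ratio $w$ is only defined where $\pi_{-i}>0$. However, $\pi_D\ll\pi_{-i}$ holds automatically because $\pi_D\propto L_i\pi_{-i}$, and $\mu_i>0$ is needed for the LOO predictive to be meaningful. If $V_i=\infty$, the bound is trivial. We would also state the data-processing step for general dominating measures on $\theta$ and $y$, using Tonelli's theorem to exchange integrals of nonnegative functions. No other obstacle is expected.
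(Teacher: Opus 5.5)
Your proof is correct and follows the same route as the paper. You identify $V_i=\chi^2(\pi_D\|\pi_{-i})$ via $\pi_D/\pi_{-i}=L_i/\mu_i$, apply data processing for $H^2$ through the kernel $\theta\mapsto p(\cdot\mid x_i,\theta)$, and conclude with $2H^2(\pi_{-i},\pi_D)\le\chi^2(\pi_D\|\pi_{-i})$; you additionally prove the two cited inequalities directly (Cauchy--Schwarz on the Bhattacharyya coefficient, and $(\sqrt w-1)^2\le(w-1)^2$), which makes the argument self-contained without changing it.
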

The result follows from  data-processing inequality for Hellinger
distance, together with $2H^2(\pi_D,\pi_{-i})
\leq \chi^2(\pi_D\|\pi_{-i})$. Together,
Propositions ~\ref{prop:marginal-compatibility} and
~\ref{prop:predictive-deletion} yields the population PC-LOO error bound.

\subsection{Practical PC-LOO  error}

The two links separating exact LOO from
practical PC-LOO are
$$
p_i^{\loo}
=
p_i^{\pi_{-i}}
\;\longrightarrow\;
p_i^{\pi_D}
\;\longrightarrow\;
q_{i,A_i}^{\mathrm{PC}},
$$
controlled respectively by deletion influence $V_i$, and
learner compatibility $\mathcal C_{i,A_i}$.
Combining these two bounds gives the following.

\begin{theorem}[PC-LOO  error bound]
\label{theo:pc-loo-bound}
For any target observation $i$ and fixed context set $A_i$,
\begin{equation}
H^2\left(
p_i^{\loo},
q_{i,A_i}^{\mathrm{PC}}
\right)
\leq
V_i+\mathcal \C_{i,A_i}.
\label{eq:pcloo-population-bound}
\end{equation}
\end{theorem}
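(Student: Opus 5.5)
We combine Proposition~\ref{prop:predictive-deletion} and the practical compatibility bound \eqref{eq:practical-compatibility-bound} through the chain
$p_i^{\loo}=p_i^{\pi_{-i}}\to p_i^{\pi_D}\to q_{i,A_i}^{\mathrm{PC}}$.
Squared Hellinger distance is not itself a metric, but $H(p,q)=\{H^2(p,q)\}^{1/2}$ is: it is $2^{-1/2}$ times the $L^2$ distance between $\sqrt p$ and $\sqrt q$. So we first apply the triangle inequality for $H$:
\[
H\left(p_i^{\loo},q_{i,A_i}^{\mathrm{PC}}\right)
\leq
H\left(p_i^{\pi_{-i}},p_i^{\pi_D}\right)
+
H\left(p_i^{\pi_D},q_{i,A_i}^{\mathrm{PC}}\right).
\]
Squaring and using $(a+b)^2\leq 2a^2+2b^2$ gives
\[
H^2\left(p_i^{\loo},q_{i,A_i}^{\mathrm{PC}}\right)
\leq
2H^2\left(p_i^{\pi_{-i}},p_i^{\pi_D}\right)
+
2H^2\left(p_i^{\pi_D},q_{i,A_i}^{\mathrm{PC}}\right).
\]
This factor of $2$ is exactly what the two halves in the earlier propositions absorb, which is why the theorem carries no constants.

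Next we bound each term. For the first, Proposition~\ref{prop:predictive-deletion} gives $H^2(p_i^{\pi_{-i}},p_i^{\pi_D})\leq \tfrac12 V_i$, so $2H^2(p_i^{\pi_{-i}},p_i^{\pi_D})\leq V_i$. For the second, recall that $q_{i,A_i}^{\mathrm{PC}}=q_{i,A_i}^{\pi_D,F}$ by definition \eqref{eq:population-pcloo}. Hence Proposition~\ref{prop:marginal-compatibility} with $\rho=\pi_D$, in the form \eqref{eq:practical-compatibility-bound}, yields $2H^2(p_i^{\pi_D},q_{i,A_i}^{\mathrm{PC}})\leq \C_{i,A_i}$. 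Summing the two bounds gives \eqref{eq:pcloo-population-bound}.

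The main point to check is that the triangle inequality is applied to $H$ and not to $H^2$. Whether a constant appears depends entirely on this step. With the paper's normalization $H^2=\tfrac12\int(\sqrt p-\sqrt q)^2$, the inequality $(a+b)^2\le 2a^2+2b^2$ turns the two $\tfrac12$ factors into unit coefficients, so the bound is exactly $V_i+\C_{i,A_i}$.

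Two degenerate cases also need attention, and neither breaks the argument. If $\C_{i,A_i}=\infty$, or if $\mu_i=0$ so that $V_i$ is undefined or infinite, the bound holds trivially under the convention that the right-hand side is $+\infty$. This also covers the case where the earlier propositions are stated only for finite quantities.
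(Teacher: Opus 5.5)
Your proof is correct and follows the paper's argument: apply the triangle inequality for $H$ (not $H^2$) along $p_i^{\pi_{-i}}\to p_i^{\pi_D}\to q_{i,A_i}^{\mathrm{PC}}$, then use Propositions~\ref{prop:predictive-deletion} and~\ref{prop:marginal-compatibility} with $\rho=\pi_D$. Your step $(a+b)^2\le 2a^2+2b^2$ is the same inequality as the paper's squaring to $\tfrac12(\sqrt{V_i}+\sqrt{\C_{i,A_i}})^2$ followed by $2\sqrt{V_i\C_{i,A_i}}\le V_i+\C_{i,A_i}$; the paper just records that slightly sharper intermediate bound along the way.
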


The result follows from the triangle inequality for Hellinger
distance, and the preceding compatibility and deletion bounds.

Importantly, Equation \eqref{eq:pcloo-population-bound} controls 
predictive distributions, not the 
realized pointwise log-score error
$\log q_{i,A_i}^{\mathrm{PC}}(y_i)- \log p_i^{\loo}(y_i)$,
which may be large where either predictive density is small.

\subsection{Large-sample behavior of deletion influence}

The practical PC-LOO error decomposition also clarifies which component  should 
disappear as the dataset grows. In regular fixed-dimensional models, deleting 
one observation perturbs the posterior by an increasingly small amount as 
$n$ grows, suggesting that $V_i$ should vanish with $n$.

\begin{proposition}[Asymptotic deletion influence]
\label{prop:asymptotic-deletion}
Under standard Bernstein-von Mises regularity conditions for a
fixed-dimensional Bayesian model, together with mild local regularity
of the individual likelihood contribution,

\begin{equation}
V_i = O_p(n^{-1}).
\label{eq:deletion-rate}
\end{equation}
\end{proposition}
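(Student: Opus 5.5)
The plan is to work with the variance of the log-likelihood of observation $i$ under the LOO posterior, and then transfer it to the relative variance of $L_i$. Write $\ell_i(\theta)=\log L_i(\theta)$. Under the Bernstein--von Mises conditions, $\pi_{-i}$ concentrates around the LOO posterior mode $\hat\theta_{-i}$ at rate $n^{-1/2}$. It is approximately $N(\hat\theta_{-i}, n^{-1}I(\theta_0)^{-1})$, with $\hat\theta_{-i}\to\theta_0$. The local regularity assumption will be that $\ell_i$ is twice differentiable near $\theta_0$, with $\|\nabla\ell_i\|$ and $\|\nabla^2\ell_i\|$ bounded by an integrable envelope, so that $\nabla \ell_i(\hat\theta_{-i})=O_p(1)$. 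Also, $y_i$ should not drive the exponential moments of $\ell_i$ out of control in the tails.

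The first step is a local expansion
$$\ell_i(\theta)=\ell_i(\hat\theta_{-i})+\nabla\ell_i(\hat\theta_{-i})^\top(\theta-\hat\theta_{-i})+R_i(\theta),$$
with $|R_i(\theta)|\le K\|\theta-\hat\theta_{-i}\|^2$ on a shrinking ball $B_n$. Factoring out the constant $L_i(\hat\theta_{-i})$, which cancels in $V_i$, gives
$$V_i=\frac{\mathbb E_{\pi_{-i}}[e^{2Z}]}{(\mathbb E_{\pi_{-i}}[e^{Z}])^2}-1, \qquad Z=\ell_i(\theta)-\ell_i(\hat\theta_{-i}).$$
On $B_n$, $Z=u^\top(\theta-\hat\theta_{-i})+O(\|\theta-\hat\theta_{-i}\|^2)$ with $u=\nabla\ell_i(\hat\theta_{-i})$. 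Under the Gaussian approximation, $\theta-\hat\theta_{-i}=O_p(n^{-1/2})$, so $Z=O_p(n^{-1/2})$. Moreover, $\mathbb E[e^{2Z}]/(\mathbb E e^{Z})^2-1=\Var(Z)+o(n^{-1})$, and $\Var(Z)\approx n^{-1}u^\top I^{-1}u=O_p(n^{-1})$. For an exact Gaussian with a linear $Z$, the ratio equals $e^{\Var Z}-1$, which makes this transparent.

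The second step is to make this rigorous by splitting the expectations over $B_n$ and $B_n^c$. On $B_n$, I would use the Laplace/BvM approximation in a form strong enough to control moments: convergence in total variation is not enough for expectations of $e^{2Z}$. So I would invoke a BvM result for the posterior restricted to $B_n$, with density ratio to the Gaussian tending to $1$ uniformly (a Laplace-type expansion). On $B_n^c$, the posterior mass is exponentially small, $e^{-cn\delta_n^2}$. Combined with a uniform bound $\sup_\theta L_i(\theta)/L_i(\hat\theta_{-i})=O_p(1)$ (for example, bounded likelihood contributions, as in Gaussian-error models with variance bounded below), this makes the tail contributions to both $\mathbb E[e^{2Z}]$ and $\mathbb E[e^Z]$ of order $o(n^{-1})$. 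The denominator is bounded away from zero because $\mathbb E e^{Z}\to1$.

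The main obstacle is exactly this tail and moment control. BvM gives weak or total-variation convergence, whereas $V_i$ is a second moment of an exponential of $\ell_i$. I would handle it by explicitly including, in the ``mild local regularity'', either boundedness of $L_i$ or a uniform integrability condition such as $\mathbb E_{\pi_{-i}}[L_i^{2+\epsilon}]/\mu_i^{2+\epsilon}=O_p(1)$. With that condition, Vitali-type arguments convert the $n^{1/2}$-scale Gaussian approximation of $\sqrt n(\theta-\hat\theta_{-i})$ into convergence of the first two moments of $\sqrt n\,Z$. That yields $nV_i\to u_0^\top I(\theta_0)^{-1}u_0$ in probability, with $u_0=\nabla\ell_i(\theta_0)$, and hence $V_i=O_p(n^{-1})$.
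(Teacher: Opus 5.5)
Your proposal is correct, and it reaches the statement by a different and more refined route than the paper.

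\textbf{The paper's route.} The paper stays on the likelihood scale and uses only the fact that a variance is at most the mean squared deviation about any constant:
$\Var_{\pi_{-i}}\{L_i(\theta)\}\le \mathbb E_{\pi_{-i}}[\{L_i(\theta)-L_i(\theta_0)\}^2]\le B_i^2\,\mathbb E_{\pi_{-i}}\|\theta-\theta_0\|^2$.
It then makes four assumptions: local Lipschitz continuity with $B_i=O_p(1)$; the posterior second-moment rate $\mathbb E_{\pi_{-i}}\|\theta-\theta_0\|^2=O_p(n^{-1})$, taken directly as a ``BvM with moments'' condition; negligible tail contributions; and $\mu_i^{-1}=O_p(1)$.

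\textbf{Your route.} You pass to the log scale and write $V_i=\mathbb E[e^{2Z}]/(\mathbb E e^{Z})^2-1$. A Laplace-type density approximation on a shrinking ball, together with exponential tail bounds, then gives $V_i=\Var(Z)+o_p(n^{-1})$. In fact it gives the sharp limit $nV_i\to\nabla\ell_i(\theta_0)^\top I(\theta_0)^{-1}\nabla\ell_i(\theta_0)$.

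\textbf{What each buys.}
The paper's argument is short and needs only Lipschitz continuity rather than $C^2$ smoothness and a density-ratio approximation. Its cost is that the difficulty you correctly identify is simply absorbed into an assumption. That difficulty is that weak or total-variation BvM cannot deliver an $n^{-1}$ rate for a second moment.
Your argument identifies the leading constant, a leverage-type quantity explaining which observations are influential. It also gets $\mu_i$ bounded away from zero for free from $\mathbb E e^{Z}\to1$.
If you only want the $O_p(n^{-1})$ statement, the paper's variance trick lets you drop the second-order expansion of the ratio. It also lets you weaken ``density ratio tends to $1$'' to ``density ratio bounded'' on $B_n$.

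\textbf{A correction to your last paragraph.} The condition $\mathbb E_{\pi_{-i}}[L_i^{2+\epsilon}]/\mu_i^{2+\epsilon}=O_p(1)$ does not by itself yield convergence of the moments of $\sqrt n Z$ through a Vitali argument. That would require uniform integrability of $n\|\theta-\hat\theta_{-i}\|^2$, which is essentially the paper's moment assumption. The condition's actual role is to control the tail via H\"older:
$\mathbb E_{\pi_{-i}}[(L_i/\mu_i)^2\mathbf{1}_{B_n^c}]\le(\mathbb E_{\pi_{-i}}[(L_i/\mu_i)^{2+\epsilon}])^{2/(2+\epsilon)}\Pr_{\pi_{-i}}(B_n^c)^{\epsilon/(2+\epsilon)}$.
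This is $o_p(n^{-1})$ only because $\Pr_{\pi_{-i}}(B_n^c)$ is exponentially small. The bulk moments come from your density approximation on $B_n$, not from the moment condition.
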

The result follows from Bernstein-von Mises   posterior concentration 
\citep[Sec.~10.2]{vandervaart1998}, together 
with local smoothness of the individual likelihood contribution.

Hence, the error from  replacing $\pi_{-i}$ by $\pi_D$ vanishes with sample size, 
in regular problems. In contrast, nothing  forces 
$\C_{i,A_i}$ to decrease with $n$,  so learner compatibility
may eventually dominate. If the learner is
predictively consistent for the model, however, then both
sources of  error can vanish jointly, as formalized next.

\subsection{Consistency with increasing context size}

Consider a sequence of datasets $D_n$ and
context sets $A_{i,n}\subseteq\{1,\ldots,n\}\setminus\{i\}$, with
$M_n=|A_{i,n}|\to\infty$. We say that $F$ is predictively
consistent for the model along $\{A_{i,n}\}$ if

\begin{equation}
\mathcal C_{i,A_{i,n}}^{\pi_{D_n},F}
\xrightarrow{p}0.
\label{eq:predictive-consistency}
\end{equation}
In other words, the learner's predictive converges to 
the corresponding model's predictive  with increasingly larger context.

\begin{corollary}[Consistency of PC-LOO]
\label{cor:pcloo-consistency}
Suppose that the conditions ensuring $ V_{i,n}=O_p(n^{-1})$
hold, and  $F$ is predictively consistent for the model along
$\{A_{i,n}\}$. Then

\begin{equation}
H^2\!\left(
p_{i,n}^{\mathrm{LOO}},
q_{i,A_{i,n}}^{\mathrm{PC}}
\right)
\xrightarrow{p}0.
\label{eq:pcloo-consistency}
\end{equation}
\end{corollary}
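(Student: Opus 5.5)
The plan is to apply Theorem~\ref{theo:pc-loo-bound} along the sequence $D_n$ and then show that each term on the right-hand side tends to zero in probability. For each $n$, the theorem gives, deterministically given $D_n$,
\begin{equation*}
0\le H^2\!\left(p_{i,n}^{\loo},q_{i,A_{i,n}}^{\mathrm{PC}}\right)\le V_{i,n}+\C_{i,A_{i,n}}^{\pi_{D_n},F}.
\end{equation*}
The argument therefore reduces to controlling the two summands separately. Both are nonnegative random variables, measurable with respect to $D_n$ (and $A_{i,n}$ if it is random but independent of the synthetic responses).

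For the deletion term, I would invoke Proposition~\ref{prop:asymptotic-deletion}, assumed to hold by hypothesis: $V_{i,n}=O_p(n^{-1})$. Concretely, for every $\varepsilon>0$ there exist $K$ and $n_0$ such that $\Pr(nV_{i,n}>K)<\varepsilon$ for $n\ge n_0$. Since $K/n\to0$, this gives $V_{i,n}\xrightarrow{p}0$. For the compatibility term, the predictive consistency assumption \eqref{eq:predictive-consistency} states directly that $\C_{i,A_{i,n}}^{\pi_{D_n},F}\xrightarrow{p}0$. Sums of sequences converging in probability to zero also converge in probability to zero: for each $\eta>0$,
\begin{equation*}
\Pr\!\left(V_{i,n}+\C_{i,A_{i,n}}>\eta\right)\le \Pr(V_{i,n}>\eta/2)+\Pr(\C_{i,A_{i,n}}>\eta/2)\to0.
\end{equation*}
Sandwiching the squared Hellinger distance between $0$ and this sum then yields \eqref{eq:pcloo-consistency}.

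The argument itself is routine. The only real obstacle is bookkeeping: the bound of Theorem~\ref{theo:pc-loo-bound} must be applied with $A_{i,n}$ fixed (or conditionally fixed) at each $n$, and $M_n\to\infty$ must not interfere with the $V_{i,n}$ rate. This is fine because $V_{i,n}$ depends only on $\pi_{-i}$ and $\pi_{D_n}$, not on $A_{i,n}$. If $A_{i,n}$ is random, I would apply the bound conditionally on $A_{i,n}$ and note that convergence in probability of $\C$ is assumed jointly over $D_n$ and $A_{i,n}$, so the same union bound goes through.
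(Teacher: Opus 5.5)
Your proposal is correct and matches the paper's proof: apply Theorem~\ref{theo:pc-loo-bound} along $(D_n,A_{i,n})$, pass from $V_{i,n}=O_p(n^{-1})$ to $V_{i,n}\xrightarrow{p}0$, invoke predictive consistency for the compatibility term, and conclude by sandwiching $H^2$ between $0$ and the vanishing sum. Your explicit union bound and the remark on random $A_{i,n}$ only spell out details the paper leaves implicit.
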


We stress that consistency does not
require $M_n=n-1$; any context sequence for which 
$\mathcal C_{i,A_{i,n}}^{\pi_D,F}\to0$ is sufficient, and
whether a  predictive learner satisfies this property is 
learner- and model-dependent.

\section{Model comparison}
\label{sec:model-comparison}

The previous results quantify how closely PC-LOO approximates exact LOO
 for a single model. For model comparison,
however, differential rather than absolute approximation
error is decisive.
Consider two models $\mathcal M_0$ and $\mathcal M_1$, and write
$$
L_k
=
\operatorname{elpd}_{\mathrm{LOO}}(\mathcal M_k),
\quad
\widehat L_k
=
\operatorname{elpd}_{\mathrm{PC}}(\mathcal M_k),
\quad k\in\{0,1\}.
$$
Define the PC-LOO approximation error for model $k$ as
$$
E_k
=
\widehat L_k-L_k.
$$
The exact and PC-LOO model differences are
$$
\Delta
=
L_1-L_0,
\qquad
\widehat\Delta
=
\widehat L_1-\widehat L_0.
$$

Since $$\widehat\Delta-\Delta =E_1-E_0,$$ 
approximation errors that are common to both models cancel in
the model difference. Thus, PC-LOO can preserve the LOO model
comparison even when neither model's absolute LOO score is estimated
with high accuracy.

\begin{proposition}[Model-ranking preservation]
\label{prop:model-rank-preservation}
If $\Delta\neq0$, then
\begin{equation}
|E_1-E_0|<|\Delta|
\quad\Longrightarrow\quad
\operatorname{sign}(\widehat\Delta)
=
\operatorname{sign}(\Delta).
\label{eq:ranking-condition}
\end{equation}
\end{proposition}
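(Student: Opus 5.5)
The plan is to work directly from the identity $\widehat\Delta-\Delta=E_1-E_0$ established just above the statement, which follows from the definitions
\[
\widehat L_k = L_k + E_k, \qquad k\in\{0,1\},
\]
by subtracting the two equations. This reduces the claim to an elementary statement about real numbers. If $\Delta\neq0$ and the perturbation $E_1-E_0$ is strictly smaller in magnitude than $\Delta$, then $\widehat\Delta=\Delta+(E_1-E_0)$ cannot cross zero.

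Concretely, the steps are as follows.
\begin{enumerate}
\item Write $\widehat\Delta=\Delta+(E_1-E_0)$.
\item Apply the reverse triangle inequality in the direction that also tracks the sign. Treat the case $\Delta>0$ first: then
\[
\widehat\Delta \ge \Delta-|E_1-E_0| = |\Delta|-|E_1-E_0| > 0 ,
\]
so $\operatorname{sign}(\widehat\Delta)=+1=\operatorname{sign}(\Delta)$.
\item For $\Delta<0$, argue symmetrically:
\[
\widehat\Delta \le \Delta+|E_1-E_0| = -|\Delta|+|E_1-E_0| < 0 .
\]
\end{enumerate}
Equivalently, both cases combine into the single bound
\[
\Delta\,\widehat\Delta \ge \Delta^2-|\Delta|\,|E_1-E_0| > 0 .
\]
The hypothesis $\Delta\neq0$ is used only to make $\operatorname{sign}(\Delta)$ nonzero and the strict inequality meaningful.

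I expect no genuine obstacle. The only point requiring care is bookkeeping: the errors $E_k$ are defined on the aggregate scores $\widehat L_k=\operatorname{elpd}_{\mathrm{PC}}(\mathcal M_k)$, which include finite-$S$ Monte Carlo error and possibly infinite log terms. I would therefore note that the statement is deterministic and assumes all four elpd values are finite, so that the subtraction is well defined. No earlier result, such as Theorem~\ref{theo:pc-loo-bound}, is needed. That theorem only motivates why $|E_1-E_0|$ may be small, for instance through common errors cancelling.
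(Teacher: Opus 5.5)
Your proof is correct and matches the paper's argument: write $\widehat\Delta=\Delta+(E_1-E_0)$, split on the sign of $\Delta$, and use $|E_1-E_0|<|\Delta|$ to show that $\widehat\Delta$ cannot cross zero. Your remark that all four elpd values must be finite for the subtraction to be well defined is reasonable bookkeeping that the paper leaves implicit.
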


Hence, large absolute errors can preserve the ranking when they are
similar across models, while small errors can reverse it when
$|\Delta|$ is small. 

\section{Experiments}
\label{sec:experiments}

We organize the experiments around the theoretical decomposition above,
rather than around an extensive benchmark suite. 
Code and scripts reproducing all experiments, figures, and tables are
available in the following 
\href{https://anonymous.4open.science/r/pc-loo-1D42}{repository}.

\subsection{Experimental setup}

The theory applies to a generic predictive learner $F$, while practical
PC-LOO is instantiated with TabICL. We use controlled Gaussian regression
experiments  where the oracle conditional predictor, exact LOO
distribution, and quantities appearing in the theory are available.

Real-data experiments use Stack Loss ($n=21$), South German Credit 
($n=1,000$), California Housing
($n=20{,}640$), and YearPredictionMSD ($n=463{,}715$). Stack Loss, 
and the Credit data provide a small and moderate problem where exact LOO
can be evaluated exhaustively; California Housing is used to study 
context budget and practical pointwise accuracy; and YearPredictionMSD
isolates large-$n$ behavior. For California Housing, computation is 
restricted to a subset of targets stratified by $V_i$ to include 
observations ranging from weakly to highly influential.
Methods are evaluated on the same selected targets, with
exact LOO as the reference, and PSIS-LOO as another 
practical baseline. In particular, the  Credit data
is used to illustrate PC-LOO on TabICL and a binary
classification problem.
Additional details  are given in the supplementary material.

\subsection{Testing the bound theory}

We first test the PC-LOO bound theory in a controlled Gaussian regression
setting, where the required quantities can be computed directly.
Figure~\ref{fig:population-theory} evaluates the predictive-context
identity, the compatibility and deletion bounds, and their combination.
The oracle approximation converges to $p_i^\rho$ for all considered
$\rho$, while the observed predictive discrepancies remain below their
corresponding theoretical bounds. These results support the proposed
decomposition of PC-LOO error into learner compatibility and posterior
deletion influence.

\begin{figure}[t]
    \centering
    \includegraphics[width=\linewidth]{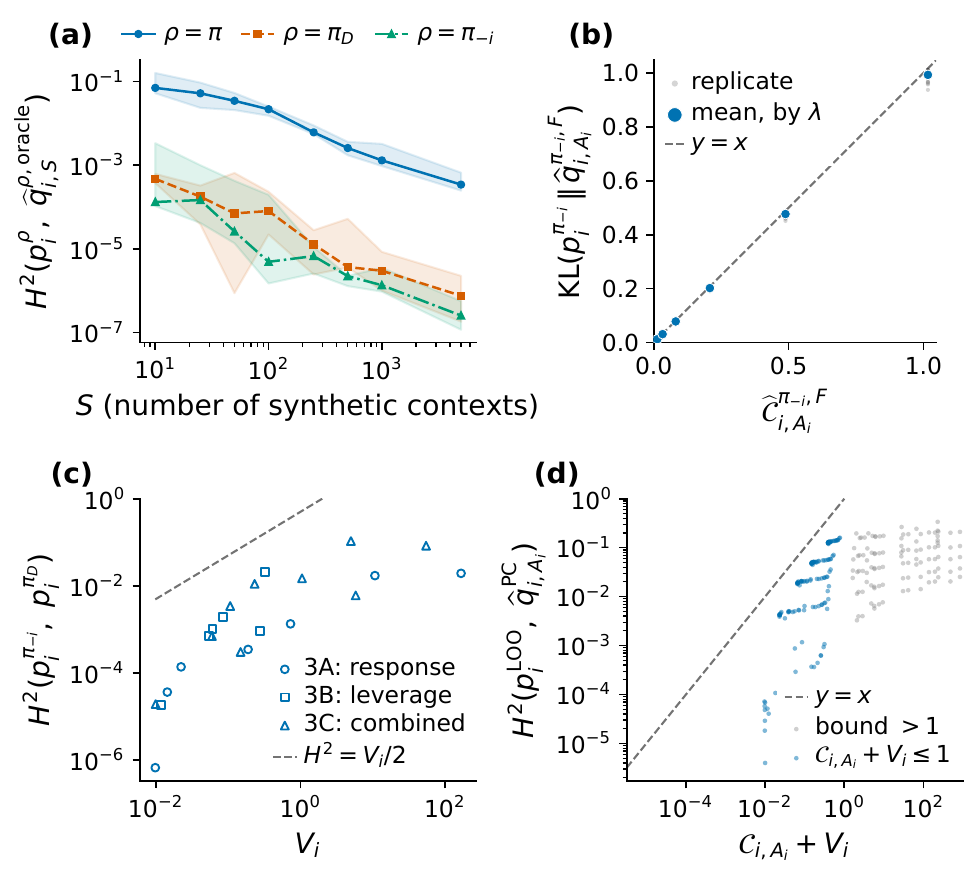}
    \caption{
    Validation of the population theory in controlled Gaussian regression.
    (a) The finite-$S$ oracle approximation converges to $p_i^\rho$ for
    $\rho\in\{\pi,\pi_D,\pi_{-i}\}$, supporting the predictive-context
    identity.
    (b) The marginal predictive KL discrepancy is bounded by the estimated
    learner compatibility under $\rho=\pi_{-i}$.
    (c) The predictive shift
    $H^2(p_i^{\pi_{-i}},p_i^{\pi_D})$ remains below $V_i/2$ across response,
    leverage, and combined perturbations.
    (d) The total PC-LOO discrepancy
    $H^2(p_i^{\mathrm{LOO}},q_{i,A_i}^{\mathrm{PC}})$ is bounded by
    $\mathcal C_{i,A_i}+V_i$; gray points indicate the trivial regime
    $\mathcal C_{i,A_i}+V_i>1$.
    }
    \label{fig:population-theory}
\end{figure}

\subsection{Context budget and large-$n$ behavior}

The predictive-context identity holds for any fixed context set, but
practical accuracy and cost depend on the learner's context budget
$M$. Figure~\ref{fig:context-budget} shows on California Housing that
increasing $M$ generally reduces pointwise PC-LOO error and learner
compatibility, with diminishing gains as computation increases.

In the sequence, we hold the context budget fixed and increase the
sample size on YearPredictionMSD. Figure~\ref{fig:large-n} 
shows that deletion influence decreases approximately at the predicted 
$n^{-1}$ rate, whereas pointwise PC-LOO error does not decrease comparably.
The compatibility diagnostic on a small fixed set of targets is consistent 
with this remaining error being learner-limited rather than deletion-limited.

\begin{figure}[t]
    \centering
    \includegraphics[width=\linewidth]{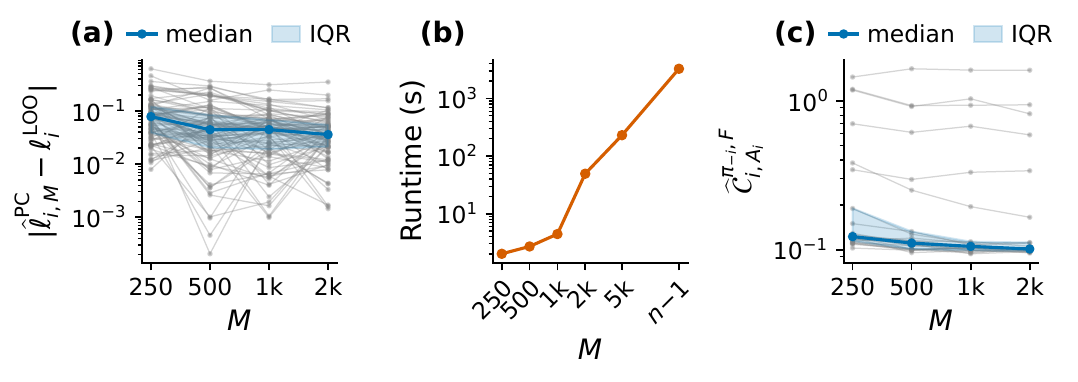}
    \caption{
    Context-budget behavior on California Housing.
    Increasing $M$ generally reduces pointwise PC-LOO error and the
    learner-compatibility, while computational cost rises
    substantially. Panel~(c) uses LOO-posterior contexts.
    }
    \label{fig:context-budget}
\end{figure}

\begin{figure}[t]
    \centering
    \includegraphics[width=\linewidth]{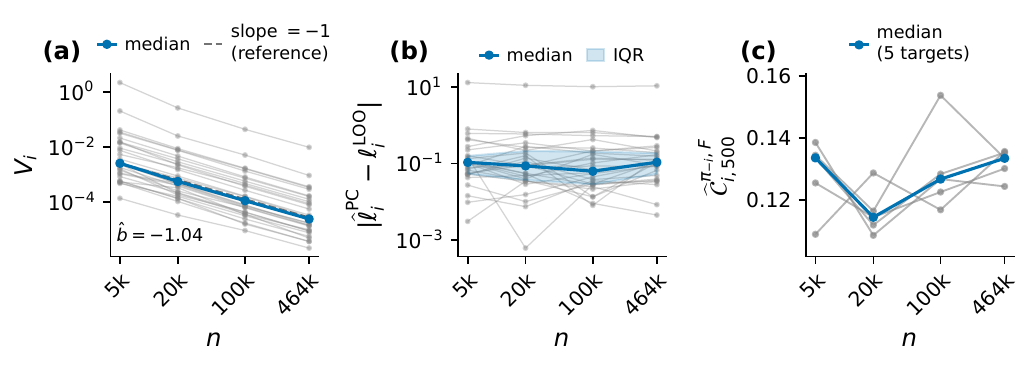}
    \caption{
    Large-$n$ behavior on YearPredictionMSD with fixed context budget.
    (a) Deletion influence decreases approximately at the predicted
    $n^{-1}$ rate.
    (b) Pointwise PC-LOO error does not decrease at the same rate.
    (c) Compatibility on a small fixed set of targets remains roughly
    stable across $n$.
    }
    \label{fig:large-n}
\end{figure}

\subsection{Model comparison}

Here, we examine if PC-LOO preserves differences between competing
models. In controlled Gaussian experiments with varying signal strength,
Figure~\ref{fig:model-comparison} compares exact
$\Delta_{\mathrm{LOO}}$ with oracle and TabICL PC-LOO estimates, and
reports ranking agreement across quartiles of standardized separation
$z=|\Delta_{\mathrm{LOO}}|/\mathrm{SE}(\Delta_{\mathrm{LOO}})$.
The oracle closely tracks exact LOO, while TabICL introduces additional
scatter; in both cases, ranking agreement increases as model separation
grows, consistent with Proposition~\ref{prop:model-rank-preservation}.

\begin{figure}[t]
    \centering
    \includegraphics[width=\linewidth]{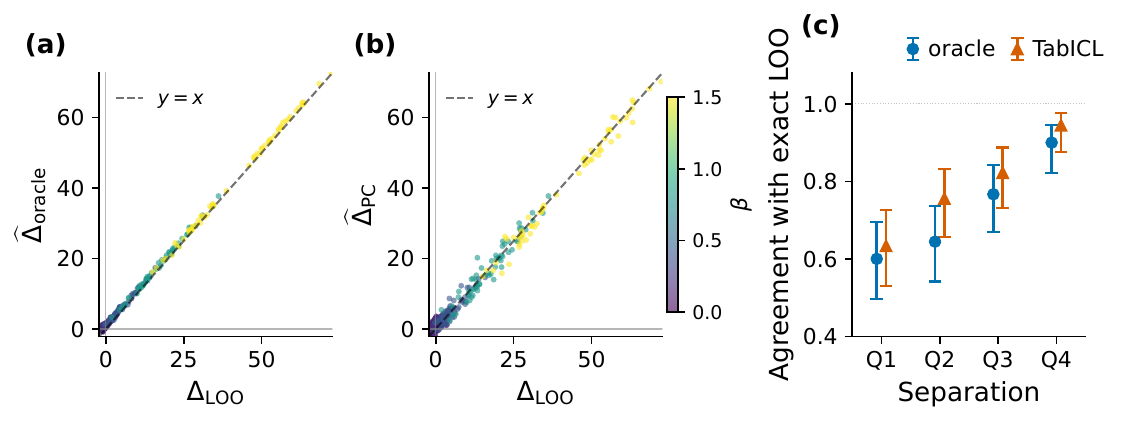}
    \caption{
    Model comparison in controlled Gaussian regression.
    (a) Oracle predictive-context estimates of the model difference
    closely track exact $\Delta_{\mathrm{LOO}}$.
    (b) TabICL PC-LOO introduces additional approximation error.
    (c) Agreement with the exact LOO ranking increases across quartiles
    of standardized separation
    $z=|\Delta_{\mathrm{LOO}}|/\mathrm{SE}(\Delta_{\mathrm{LOO}})$.
    }
    \label{fig:model-comparison}
\end{figure}
\subsection{Practical accuracy on real data}

We next assess practical pointwise accuracy against exact LOO.
Stack Loss and the South German credit data permit exhaustive evaluation
over all observations, while California Housing uses the stratified
target subset described above.
Table~\ref{tab:practical-accuracy} reports absolute pointwise
log-predictive-density errors for PC-LOO and  PSIS-LOO.
PC-LOO errors remain moderate on these examples, although 
PSIS is  more accurate in these regular regression
settings.

\begin{table}[t]
\centering
\caption{Pointwise approximation error from PC-LOO and PSIS-LOO
relative to exact LOO.
Reported are mean absolute error (MAE) and root mean squared error
(RMSE).}
\label{tab:practical-accuracy}
\small
\setlength{\tabcolsep}{3pt}
\begin{tabular}{llrrrr}
\toprule
& & \multicolumn{2}{c}{PC-LOO} &
    \multicolumn{2}{c}{PSIS-LOO} \\
Data & Model & MAE & RMSE & MAE & RMSE \\
\midrule
Stack & $\mathcal{M}_0$ & .301 & .649 & .004 & .005 \\
      & $\mathcal{M}_1$ & .608 & 1.312 & .014 & .048 \\
      & $\mathcal{M}_2$ & .473 & .824 & .013 & .024 \\
Credit & $\mathcal{M}$ & .089 & .154 & $.002$ & $.004$ \\
California & $\mathcal{M}$ & .064 & .090 & $2.3{\times}10^{-4}$ & $3.6{\times}10^{-4}$ \\
\bottomrule
\end{tabular}
\end{table}

\subsection{Preliminary reliability diagnostics}

The population bounds do not provide an observation-specific diagnostic
for realized PC-LOO error. We therefore compare deletion influence
$V_i$ with the PSIS Pareto shape estimate $\widehat k_i$, and examine
how pointwise approximation error varies with $\widehat k_i$.
Figure~\ref{fig:reliability} shows that $V_i$ and $\widehat k_i$ are
positively associated, while larger $\widehat k_i$ also tends to
correspond to larger pointwise error. However, $\widehat k_i$ is not a
complete diagnostic for PC-LOO, and no universal PC-LOO reliability
threshold is established here. Additional experiments on the reliability 
are provided in the supplement.

\begin{figure}[t]
    \centering
    \includegraphics[width=\linewidth]{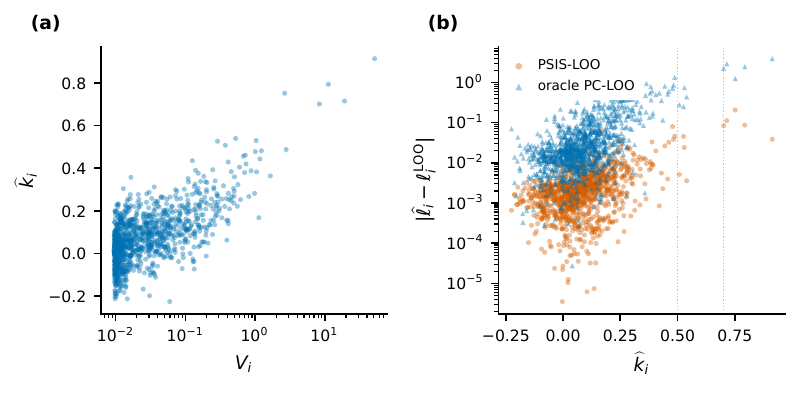}
    \caption{
    Preliminary reliability analysis.
    (a) Deletion influence $V_i$ versus PSIS Pareto shape estimate
    $\widehat k_i$.
    (b) Absolute pointwise LPD error versus $\widehat k_i$ for PSIS-LOO
    and oracle PC-LOO. Larger $\widehat k_i$ is associated with more
    difficult observations, but does not provide a calibrated PC-LOO
    reliability threshold.
    }
    \label{fig:reliability}
\end{figure}

\section{Conclusion}
\label{sec:conclusion}

PC-LOO provides a predictive-context interface from a fitted Bayesian model
to an external predictive learner. The oracle identity shows that
posterior-predictive contexts can represent the Bayesian predictive
distribution exactly after averaging over contexts; population approximation
enters when the oracle conditional predictor is replaced by a learned
predictor and the LOO posterior by the full posterior. The resulting
decomposition into learner compatibility and deletion influence makes
these two sources of error explicit.

Deletion influence decreases approximately at the predicted $n^{-1}$
rate in the regular large-sample experiment, while practical PC-LOO
error need not decrease at the same rate when the learner context budget
is fixed.  For model comparison, the relevant quantity is differential instead of absolute
approximation error. The experiments also show that learner compatibility can remain the
dominant source of error even when deletion influence becomes small.
This suggests a strategy for improving PC-LOO,
for example through learner selection, ensembles or mixtures of
predictive learners, or lightweight adaptation to
posterior-predictive contexts. A Reliability diagnostic that jointly
reflect posterior deletion influence and learner mismatch,
to complement $V_i$ and Pareto-$\widehat{k}_i$, is needed and left for future work.

Finally, the framework is more general than the present TabICL
experiments. First, since PC-LOO  approximates the full LOO 
posterior distribution, it is also directly applicable to other scoring rules.  
In addition,  as better and more diverse predictive learners become available,
including learners for dependent, structured, or non-tabular data, the same
predictive-context construction can be reused without changing the
underlying theory. This also suggests extensions beyond pointwise LOO,
such as joint leave-group prediction, where synthetic contexts could
support more realistic prediction tasks involving blocks, clusters, or
other dependent groups. 



\bibliographystyle{apalike}
\bibliography{references}

\onecolumn
\clearpage\onecolumn\appendix
\aistatstitle{Supplementary Material}
\section{Proofs}
\label{app:proofs}

Throughout, all  distributions are assumed to admit densities
with respect to a common dominating measure, and all divergences
appearing below are assumed to be finite.

\subsection{Proof of Theorem~\ref{theo:predictive-context-identity}}
\begin{proof}
By definition,
$$
g_{i,A_i}^{\rho}(y\mid\widetilde y_{A_i})
=
\frac{
m_{i,A_i}^{\rho}(y,\widetilde y_{A_i})
}{
m_{A_i}^{\rho}(\widetilde y_{A_i})
}.
$$
Therefore,
\begin{align*}
\int
g_{i,A_i}^{\rho}(y\mid\widetilde y_{A_i})
m_{A_i}^{\rho}(\widetilde y_{A_i})
\,d\widetilde y_{A_i}
&\qquad=
\int
m_{i,A_i}^{\rho}(y,\widetilde y_{A_i})
\,d\widetilde y_{A_i}
\\
&\qquad=
\int
p(y\mid x_i,\theta)
\left[
\int
\prod_{j\in A_i}
p(\widetilde y_j\mid x_j,\theta)
\,d\widetilde y_{A_i}
\right]
\rho(\theta)\,d\theta
\\
&\qquad=
\int
p(y\mid x_i,\theta)\rho(\theta)\,d\theta
=
p_i^\rho(y),
\end{align*}
where the third equality follows by Tonelli's theorem and the inner
integral equals one.
\end{proof}

\subsection{Proof of Corollary~\ref{cor:loo-predictive-context}}
\begin{proof}
Apply Theorem~\ref{theo:predictive-context-identity} with
$\rho=\pi_{-i}$.
\end{proof}

\subsection{Proof of Proposition~\ref{prop:marginal-compatibility}}
\begin{proof}
Define two joint densities on
$(\widetilde Y_{A_i},Y)$ by
$$
P(\widetilde y_{A_i},y)
=
m_{A_i}^{\rho}(\widetilde y_{A_i})
g_{i,A_i}^{\rho}(y\mid\widetilde y_{A_i}) \quad \text{ and } \quad 
Q(\widetilde y_{A_i},y)
=
m_{A_i}^{\rho}(\widetilde y_{A_i})
q_F(y\mid x_i,\widetilde D_{A_i}).
$$
Because $P$ and $Q$ have the same marginal distribution for
$\widetilde Y_{A_i}$,
\begin{align*}
\KL(P\|Q)
&=
\mathbb E_{\widetilde Y_{A_i}\sim m_{A_i}^{\rho}}
\left[
\KL\left\{
g_{i,A_i}^{\rho}(\cdot\mid\widetilde Y_{A_i})
\|
q_F(\cdot\mid x_i,\widetilde D_{A_i})
\right\}
\right]
\\
&=
\mathcal C_{i,A_i}^{\rho,F}.
\end{align*}
Marginalizing $\widetilde Y_{A_i}$ maps $P$ and $Q$ respectively to
$p_i^\rho$ and $q_{i,A_i}^{\rho,F}$. Hence the data-processing
inequality for KL divergence gives
$$
\KL\left(
p_i^\rho
\|
q_{i,A_i}^{\rho,F}
\right)
\leq
\mathcal C_{i,A_i}^{\rho,F}.
$$
Finally, using $H^2(P,Q)\leq \KL(P\|Q)/2$ gives
$$
H^2\left(
p_i^\rho,
q_{i,A_i}^{\rho,F}
\right)
\leq
\frac12\mathcal C_{i,A_i}^{\rho,F}.
$$
\end{proof}

\subsection{Proof of Proposition~\ref{prop:predictive-deletion}}

\begin{proof}

First, we prove that $V_i$ is a $\chi^2$ divergence. 

Bayes' rule gives
$$
\frac{\pi_D(\theta)}{\pi_{-i}(\theta)}
=
\frac{L_i(\theta)}{\mu_i},
\qquad
\mu_i
=
\mathbb E_{\pi_{-i}}[L_i(\theta)].
$$
Therefore,
\begin{align*}
\chi^2(\pi_D\|\pi_{-i})
&=
\mathbb E_{\pi_{-i}}
\left[
\left\{
\frac{\pi_D(\theta)}{\pi_{-i}(\theta)}-1
\right\}^2
\right]
\\
&=
\mathbb E_{\pi_{-i}}
\left[
\left\{
\frac{L_i(\theta)}{\mu_i}-1
\right\}^2
\right]
\\
&=
\frac{
\Var_{\pi_{-i}}\{L_i(\theta)\}
}{
\mu_i^2
}
=
V_i.
\end{align*}

Let
$$
K_i(\theta,dy)
=
p(y\mid x_i,\theta)\,dy
$$
denote the predictive Markov kernel. Applying $K_i$ to $\pi_{-i}$
and $\pi_D$ gives respectively $p_i^{\pi_{-i}}$ and
$p_i^{\pi_D}$. Since squared Hellinger distance is an
$f$-divergence, the data-processing inequality yields
$$
H^2\left(
p_i^{\pi_{-i}},
p_i^{\pi_D}
\right)
\leq
H^2(\pi_{-i},\pi_D).
$$
Moreover,
$$
H^2(\pi_{-i},\pi_D)
\leq
\frac12\chi^2(\pi_D\|\pi_{-i})
=
\frac12V_i.
$$
Combining the two inequalities proves the result.
\end{proof}

\subsection{Proof of Theorem~\ref{theo:pc-loo-bound}}
\begin{proof}
By the triangle inequality for Hellinger distance,
\begin{align*}
H\left(
p_i^{\loo},
q_{i,A_i}^{\mathrm{PC}}
\right)
&\leq
H\left(
p_i^{\pi_{-i}},
p_i^{\pi_D}
\right)
+
H\left(
p_i^{\pi_D},
q_{i,A_i}^{\mathrm{PC}}
\right)
\\
&\leq
\sqrt{\frac{V_i}{2}}
+
\sqrt{\frac{\mathcal C_{i,A_i}}{2}},
\end{align*}
where the second inequality follows from
Propositions~\ref{prop:predictive-deletion} and
\ref{prop:marginal-compatibility}. Squaring gives
$$
H^2\left(
p_i^{\loo},
q_{i,A_i}^{\mathrm{PC}}
\right)
\leq
\frac12
\left(
\sqrt{V_i}
+
\sqrt{\mathcal C_{i,A_i}}
\right)^2.
$$
Finally,
$2\sqrt{V_i\mathcal C_{i,A_i}}
\leq V_i+\mathcal C_{i,A_i}$,
so
$$
H^2\left(
p_i^{\loo},
q_{i,A_i}^{\mathrm{PC}}
\right)
\leq
V_i+\mathcal C_{i,A_i}.
$$
\end{proof}

\subsection{Proof of Proposition~\ref{prop:asymptotic-deletion}}
\begin{proof}
Suppose the parameter dimension is fixed and that the LOO posterior
satisfies Bernstein--von Mises concentration around the true parameter
$\theta_0$, with posterior second moments satisfying
\begin{equation}
\mathbb E_{\pi_{-i}}
\left[
\|\theta-\theta_0\|^2
\right]
=
O_p(n^{-1}).
\label{eq:bvm-second-moment}
\end{equation}
Such behavior holds under the usual regularity conditions, together
with the corresponding moment convergence; see, e.g.,
\citet[Sec.~10.2]{vandervaart1998}.

Assume additionally that, in a neighborhood of $\theta_0$,
$L_i(\theta)$ is locally Lipschitz with random Lipschitz constant
$B_i=O_p(1)$, and that posterior contributions outside this
neighborhood are negligible at the same order. Then
$$
|L_i(\theta)-L_i(\theta_0)|
\leq
B_i\|\theta-\theta_0\|
$$
on this neighborhood. Hence
\begin{align*}
\Var_{\pi_{-i}}\{L_i(\theta)\}
&\leq
\mathbb E_{\pi_{-i}}
\left[
\{L_i(\theta)-L_i(\theta_0)\}^2
\right]
\\
&=
O_p(1)
\,
\mathbb E_{\pi_{-i}}
\left[
\|\theta-\theta_0\|^2
\right]
\\
&=
O_p(n^{-1}).
\end{align*}
If, in addition,
$\mu_i^{-1}=O_p(1)$, then
$$
V_i
=
\frac{
\Var_{\pi_{-i}}\{L_i(\theta)\}
}{
\mu_i^2
}
=
O_p(n^{-1}).
$$
\end{proof}
The result is pointwise in $i$; it does not imply a uniform
$O_p(n^{-1})$ bound over increasingly extreme observations.

\subsection{Proof of Proposition~\ref{prop:model-rank-preservation}}
\begin{proof}
Since $ \widehat\Delta=\Delta+(E_1-E_0).
$
If $\Delta>0$ and $|E_1-E_0|<|\Delta|$, then $\widehat\Delta>\Delta-|\Delta|=0.$
Similarly, if $\Delta<0$, $\widehat\Delta<\Delta+|\Delta|=0.$
Therefore
$$
\operatorname{sign}(\widehat\Delta)
=
\operatorname{sign}(\Delta).
$$
\end{proof}

\subsection{Proof of Corollary~\ref{cor:pcloo-consistency}}

\begin{proof}
By Theorem~\ref{theo:pc-loo-bound}, applied to the sequence
$(D_n,A_{i,n})$,
$$
H^2\!\left(
p_{i,n}^{\mathrm{LOO}},
q_{i,A_{i,n}}^{\mathrm{PC}}
\right)
\leq
V_{i,n}
+
\mathcal C_{i,A_{i,n}}^{\pi_{D_n},F}.
$$
Under the stated regularity conditions,
$V_{i,n}=O_p(n^{-1})$, and therefore $V_{i,n}\xrightarrow{p}0$.
By predictive consistency of $F$ for the model along
$\{A_{i,n}\}$, we have that 
$
\mathcal C_{i,A_{i,n}}^{\pi_{D_n},F}
\xrightarrow{p}0.
$
Hence,
$$
V_{i,n}
+
\mathcal C_{i,A_{i,n}}^{\pi_{D_n},F}
\xrightarrow{p}0,
\text{ and consequently }
H^2\!\left(
p_{i,n}^{\mathrm{LOO}},
q_{i,A_{i,n}}^{\mathrm{PC}}
\right)
\xrightarrow{p}0.
$$
\end{proof}

\section{Additional Experimental Details}
\label{sec:supp-experiments}

This section provides additional details for the experiments reported in
the main text. The experiments have two complementary purposes. The
controlled Gaussian experiments isolate the mechanisms appearing in the
theory, for which the relevant predictive distributions and discrepancy
measures can be evaluated directly. The real-data experiments instead
evaluate practical PC-LOO with TabICL and study the effects of context
size, sample size, and computational approximation.

\subsection{General PC-LOO implementation}
\label{sec:supp-pcloo-implementation}

For each target observation $i$, practical PC-LOO proceeds as follows:

\begin{enumerate}
    \item Fit the Bayesian model once to the full dataset $D$ and obtain
    the posterior $\pi_D(\theta)$.

    \item For $s=1,\ldots,S$, draw
    $\theta^{(s)}\sim\pi_D(\theta)$ and generate a synthetic context
    $$
    \widetilde D_{A_i}^{(s)}
    =
    \{(x_j,\widetilde Y_j^{(s)}):j\in A_i\},
    \qquad
    \widetilde Y_j^{(s)}
    \sim p(\cdot\mid x_j,\theta^{(s)}),
    $$
    where $A_i\subseteq\{1,\ldots,n\}\setminus\{i\}$ and
    $M=|A_i|$.

    \item Supply $\widetilde D_{A_i}^{(s)}$ as context to the predictive
    learner and $x_i$ as the query, obtaining
    $q_F(\cdot\mid x_i,\widetilde D_{A_i}^{(s)})$.

    \item Average the predictive densities across contexts,
    $$
    \widehat q_{i,A_i,S}^{\mathrm{PC}}(y)
    =
    \frac{1}{S}\sum_{s=1}^S
    q_F(y\mid x_i,\widetilde D_{A_i}^{(s)}),
    $$
    and evaluate
    $$
    \widehat\ell_i^{\mathrm{PC}}
    =
    \log\widehat q_{i,A_i,S}^{\mathrm{PC}}(y_i).
    $$
\end{enumerate}

The same $\theta^{(s)}$ is used for all observations within a synthetic
context, preserving dependence induced by posterior uncertainty.
Unless stated otherwise, synthetic contexts are generated from
$\pi_D$.

\subsection{Computational structure and complexity}
\label{sec:supp-complexity}

PC-LOO requires one Bayesian model fit, followed by predictive-learner
evaluations for each target and synthetic context. Let
$C_F(M)$ denote the cost of evaluating the learner with a context of
size $M$. Evaluating all $n$ targets using $S$ synthetic contexts has
nominal cost $O\!\left(nS\,C_F(M)\right),$
in addition to the single Bayesian fit and posterior-predictive
simulation. The corresponding learner-side memory requirement depends
on the implementation of $F$ and its batching strategy.

The computation is highly parallel. Conditional on the fitted Bayesian
model, different targets $i$ and synthetic contexts $s$ can be evaluated
independently. When the context indices $A_i$ are fixed across
replicates, the covariates $X_{A_i}$ are also unchanged: only the
synthetic responses $\widetilde Y_{A_i}^{(s)}$ vary. This structure
allows contexts to be batched across posterior-predictive replicates
and potentially permits reuse of computations involving the fixed
covariates. Thus, the nominal $nS$ learner evaluations need not imply
$nS$ sequential evaluations.

The context budget $M$ provides an additional computational control.
The predictive-context identity is exact for any fixed $M$, so reducing
$M$ introduces no oracle subsampling bias. In practical PC-LOO, the
resulting statistical-computational trade-off instead arises through
learner compatibility: smaller contexts reduce computation but may
provide less information to the external learner.
\subsection{Controlled Gaussian experiments}
\label{sec:supp-gaussian}

The controlled experiments use
$
x_j \stackrel{\mathrm{iid}}{\sim} N(0,1),
\quad
y_j = \alpha + \beta x_j + \varepsilon_j,
\quad
\varepsilon_j \stackrel{\mathrm{iid}}{\sim} N(0,\sigma^2),
$
with $n=50$, $\alpha=0$, $\beta=1$, and $\sigma^2=1$.  
The fitted Bayesian linear model uses an intercept and slope, with
$
y\mid\boldsymbol\beta,\sigma^2
\sim N(X\boldsymbol\beta,\sigma^2I),
\quad
\boldsymbol\beta\mid\sigma^2
\sim N(0,10\sigma^2I),
\quad
\sigma^2\sim\operatorname{InvGamma}(2,2).
$
Posterior inference is analytic under the conjugate
Normal-Inverse-Gamma model, and Exact LOO quantities, with Student-t,  
are obtained by analytic leave-one-out updating. Also, unless stated otherwise,
the synthetic context contains all non-target observations, so $M=n-1=49$.

\subsubsection{Predictive-context identity}
\label{sec:supp-identity}

We numerically verify the predictive-context identity under
$\rho\in\{\pi,\pi_D,\pi_{-i}\}$, prior, full posterior and leave-data-out
posterior respectively. For each $\rho$, three targets are
selected according to leverage (central, moderate, and highest), and
synthetic contexts are generated using the full $M=49$ non-target
observations. Prediction is performed with the exact Student-t 
predictive density, so the only approximation is Monte Carlo
averaging over contexts. We use $S\in\{10,25,50,100,250,500,1000,5000\}$
and compare the resulting predictive with the exact Student-$t$
predictive density $p_i^\rho$ using squared Hellinger distance.

\subsubsection{Learner compatibility}
\label{sec:supp-compatibility}

To control learner mismatch explicitly, we replace the oracle student-t 
predictive density $g$ by
$
q_\lambda=(1-\lambda)g+\lambda r,
$
where $r$ is Gaussian with mean shifted by two oracle predictive standard
deviations and standard deviation $1.5$ times that of $g$. We use
$
\lambda\in\{0,.02,.05,.10,.20,.40,.70,1\},
$
with $S=500$ synthetic contexts and $R=10$ independent Monte Carlo
repetitions for each setting.
Contexts are generated from $\pi_{-i}$ with $M=49$. For each setting, we
compute the conditional compatibility
$\mathcal C_{i,A_i}^{\pi_{-i},F}$ and the KL divergence between the
corresponding context-averaged predictive density and the exact LOO
predictive density. Numerical integrals are evaluated on a common
4000-point grid.

\subsubsection{Deletion influence}
\label{sec:supp-deletion}

To vary deletion influence systematically, observation $i=49$ is modified
under three regimes: (i) response contamination,
$y_i=\alpha+\beta x_i+k\sigma$, with
$k\in\{0,.5,1,2,3,4,6\}$; (ii) leverage contamination,
$x_i\in\{0,1,2,3,5,8\}$, with $y_i$ drawn from the true model; and
(iii) combined contamination, with
$x_i\in\{0,3,6\}$ and $k\in\{0,2,4\}$.
For each condition, $V_i$ is estimated using $20{,}000$ draws from the
exact LOO posterior $\pi_{-i}$. The predictive discrepancy is approximated
using $S=1000$ synthetic contexts under each of $\pi_D$ and $\pi_{-i}$,
averaging the corresponding exact Student-$t$ predictive densities, 
and is compared with the theoretical bound $V_i/2$.

\subsubsection{Combined approximation bound}
\label{sec:supp-combined}

We jointly vary learner mismatch and deletion influence using
$
\lambda\in\{0,.05,.15,.30,.60\},
\quad
k\in\{0,1,2,4,6\},
$
where $\lambda$ controls the predictive-density perturbation defined
above and $k$ controls response contamination,
$y_i=\alpha+\beta x_i+k\sigma$, for $i=49$.
For each of the 25 settings, we use $R=10$ independent repetitions and
$S=500$ synthetic contexts with $M=49$.
The PC-LOO predictive density is computed from $\pi_D$ contexts, while
$\mathcal C_i$ is estimated using $\pi_{-i}$ contexts. We compare
$H^2(p_i^{\mathrm{LOO}},\widehat q_i^{\mathrm{PC}})$ with
$\mathcal C_i+V_i$, where $V_i$ is estimated using $20{,}000$ draws from
$\pi_{-i}$.

\subsubsection{Finite-context Monte Carlo convergence}
\label{sec:supp-mc}

We study Monte Carlo error as the number of synthetic contexts varies over
$
S\in\{2,5,10,20,50,100,250,500,1000\}.
$
For the central-leverage target, contexts are generated from $\pi_D$ with
$M=49$. We consider both the exact Student-$t$ conditional predictive
density and the perturbed density with $\lambda=.3$. Each setting is
repeated $200$ times, using context samples across values of $S$.
We report the RMSE of the estimated predictive density at $y_i$. The
reference is the exact $p_i^{\pi_D}(y_i)$ in the Student-$t$ case and an
$S=5000$ Monte Carlo estimate in the perturbed case. Log-log regressions
of RMSE on $S$ give slopes $-0.493$ and $-0.490$, respectively, close to
the Monte Carlo rate $S^{-1/2}$, shown in  Figure~\ref{fig:supp-mc}.

\begin{figure}[t]
    \centering
    \includegraphics[width=0.7\linewidth]{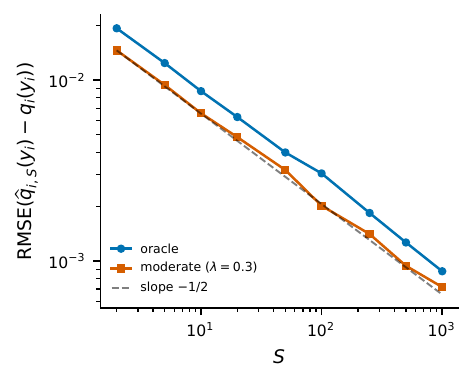}
    \caption{\textbf{Finite-context Monte Carlo convergence.}
    RMSE of the estimated predictive density at $y_i$ as a function of
    the number of synthetic contexts $S$. Results are shown for the exact
    Student-$t$ conditional predictive density and the perturbed density
    with $\lambda=.3$. The fitted log-log slopes are $-0.493$ and
    $-0.490$, respectively, close to the $S^{-1/2}$ Monte Carlo rate.}
    \label{fig:supp-mc}
\end{figure}

\subsection{California Housing experiments}
\label{sec:supp-california}

We use the California Housing dataset with all $n=20{,}640$ observations
and eight covariates. Covariates are standardized using their full-data
means and standard deviations; the response is left untransformed. We fit
the Gaussian linear model with an intercept and all eight covariates,
using the same Normal-Inverse-Gamma prior as in the controlled
experiments. Posterior inference and LOO prediction are analytic.

\subsubsection{Context size}
\label{sec:supp-california-context}

We evaluate PC-LOO at
$
M\in\{250,500,1000,2000\}
$
using the same $100$ target observations at every $M$. Targets are selected
by stratifying all observations into quintiles of $V_i$ and sampling $20$
from each quintile. For each target and $M$, we use $S=20$ synthetic
contexts generated from $\pi_D$, with a new subset of $M$ non-target
observations sampled without replacement for each context. Accuracy is
measured by
$
\left|
\widehat\ell_i^{\mathrm{PC}}-\ell_i^{\mathrm{LOO}}
\right|,
$
where $\ell_i^{\mathrm{LOO}}$ is computed analytically.

\subsubsection{Runtime with increasing context size}
\label{sec:supp-california-runtime}
Runtime is evaluated separately on eight target observations using
$
M\in\{250,500,1000,2000,5000,20639\}
$
and $S=10$ synthetic contexts per target. Contexts are generated from
$\pi_D$ and resampled for each replicate. Runtime is the mean wall-clock
time per target for TabICL prediction and scoring across all $S$
contexts; posterior fitting, exact LOO computation, and calculation of
$V_i$ are excluded.

\subsubsection{Compatibility with increasing context size}
\label{sec:supp-california-compatibility}

Compatibility is evaluated on 25 targets, with five sampled from each
$V_i$ quintile, using
$
M\in\{250,500,1000,2000\}
$
and $S=30$ synthetic contexts per target and $M$. Context subsets are
resampled without replacement for each replicate and generated from the
LOO posterior $\pi_{-i}$. We estimate
$
\widehat{\mathcal C}_{i,A_i}^{\pi_{-i},F}
=
\frac1S\sum_{s=1}^S
\mathrm{KL}\!\left(
g_{i,A_i}^{\pi_{-i}}(\cdot\mid\widetilde D_{A_i}^{(s)})
\,\|\,q_F(\cdot\mid x_i,\widetilde D_{A_i}^{(s)})
\right),
$
where the first density is the exact Student-$t$ predictive density and
$F$ is TabICL.

\subsubsection{Fixed versus resampled context subsets}
\label{sec:supp-fixed-resampled}

We compare two context-selection schemes at $M\in\{1000,2000\}$ using
the same 100 targets as the context-size experiment and $S=20$. In the
fixed scheme, one subset $A_i$ is sampled for each target and $M$ and
reused across all $S$ synthetic contexts. In the resampled scheme, a new
subset $A_i^{(s)}$ is sampled for each context. Both schemes use
$\pi_D$. Table~\ref{tab:fixed-resampled} reports pointwise PC-LOO error and mean
runtime. The two schemes have similar predictive accuracy, while fixing
the context subset reduces computation.

\begin{table}[H]
\centering
\caption{\textbf{Fixed versus resampled context subsets.}
PC-LOO absolute error and mean runtime per target.}
\label{tab:fixed-resampled}
\begin{tabular}{llrrrr}
\toprule
$M$ & Context & Median & Mean & RMSE & Time (s)\\
\midrule
1000 & Fixed     & .0415 & .0545 & .0825 & 6.19\\
1000 & Resampled & .0445 & .0535 & .0731 & 9.24\\
2000 & Fixed     & .0371 & .0481 & .0638 & 94.41\\
2000 & Resampled & .0359 & .0468 & .0659 & 101.91\\
\bottomrule
\end{tabular}
\end{table}

\subsection{YearPredictionMSD experiments}
\label{sec:supp-year}

We use the canonical training split of YearPredictionMSD, containing
$n=463{,}715$ observations and 90 covariates. Covariates and the response
are standardized using statistics from the full training split. We fit a
Gaussian linear model with an intercept and all 90 covariates, using the
same Normal-Inverse-Gamma prior as above.

We consider nested sample sizes
$
n\in\{5000,20000,100000,463715\},
$
obtained as prefixes of a single random permutation of the training data.
Thirty targets, six from each $V_i$ quintile, are selected at $n=5000$
and tracked across all sample sizes. A subset of five targets, one per
quintile, is used for the compatibility experiment.

\subsubsection{Deletion-influence scaling}
\label{sec:supp-year-deletion}

For each sample size, $V_i$ is estimated for the same 30 targets using
$20{,}000$ draws from the corresponding exact LOO posterior. To estimate
the scaling rate, we regress $\log V_i$ on $\log n$ separately for each
target and average the resulting 30 slopes. The mean slope is $-1.044$,
with 95\% confidence interval $[-1.079,-1.009]$. Figure~\ref{fig:large-n}
reports the median $V_i$ across targets at each $n$, together with an
$n^{-1}$ reference line.

\subsubsection{PC-LOO error with increasing sample size}
\label{sec:supp-year-error}

We evaluate PC-LOO on the same 30 targets at each sample size, using
$M=500$ and $S=10$. Synthetic contexts are generated from the
corresponding full posterior $\pi_{D_n}$. For each target and $n$, one
subset of $M$ non-target observations is sampled and reused across the
$S$ synthetic contexts. Accuracy is measured by
$
\left|
\widehat\ell_i^{\mathrm{PC}}-\ell_i^{\mathrm{LOO}}
\right|,
$
where $\ell_i^{\mathrm{LOO}}$ is computed analytically.
Figure~\ref{fig:large-n} reports the distribution of this error across
the 30 targets as $n$ increases.

\subsubsection{Compatibility with increasing sample size}
\label{sec:supp-year-compatibility}

Compatibility is evaluated on five of the 30 targets, one from each
$V_i$ quintile, using $M=500$ and $S=10$. At each sample size,
synthetic contexts are generated from the corresponding LOO posterior
$\pi_{-i}$. We estimate
$
\widehat{\mathcal C}_{i,500}^{\pi_{-i},F}
=
\frac{1}{S}\sum_{s=1}^S
\mathrm{KL}\!\left(
g_{i}^{\pi_{-i}}(\cdot\mid\widetilde D_{A_i}^{(s)})
\,\|\,q_F(\cdot\mid x_i,\widetilde D_{A_i}^{(s)})
\right),
$
where $g_i^{\pi_{-i}}$ is the exact Student-$t$ predictive density and
$F$ is TabICL.

\subsection{Practical real-data accuracy}
\label{sec:supp-real-accuracy}

We use the Stack Loss data with $n=21$ observations and three standardized
covariates. We consider an intercept-only model $M_0$, a model $M_1$
containing Air Flow, and a model $M_2$ containing all three covariates.
All models use the same conjugate prior as above. PC-LOO is evaluated for
all observations using the full $M=20$ context and $S=250$ synthetic
contexts generated from $\pi_D$. Exact LOO is computed analytically.
For comparison, PSIS-LOO uses $4000$ draws from $\pi_D$.

The South German Credit data contains $n=1000$ observations, $20$ covariates
and a binary response. We fit a Bayesian logistic regression with standardized
covariates and weakly informative Gaussian priors. PC-LOO is evaluated
for all observations using TabICL with context sizes
$M\in\{250,500,999\}$,  $S=20$ synthetic contexts generated from
$\pi_D$; and PSIS-LOO uses $4000$ draws from $\pi_D$. The improved
accuracy of PC-LOO for larger context is shown in Figure 
\ref{fig:credit-classification}.

For California Housing, we use the model and preprocessing described in
Section~\ref{sec:supp-california}. We evaluate 100 targets, with 20
selected from each $V_i$ quintile. PC-LOO uses $M=500$, $S=30$, and one
fixed context subset per target, with synthetic responses generated from
$\pi_D$. PSIS-LOO uses $2000$ posterior draws. Exact analytic LOO is used
as the reference for both methods.

For these datasets, pointwise accuracy is summarized by
$
\operatorname{MAE}
=
\frac1m\sum_{i=1}^m
|\widehat\ell_i-\ell_i^{\mathrm{LOO}}|,
\quad
\operatorname{RMSE}
=
\left\{
\frac1m\sum_{i=1}^m
(\widehat\ell_i-\ell_i^{\mathrm{LOO}})^2
\right\}^{1/2}.
$

\begin{figure}[H]
    \centering
    \includegraphics[width=\linewidth]{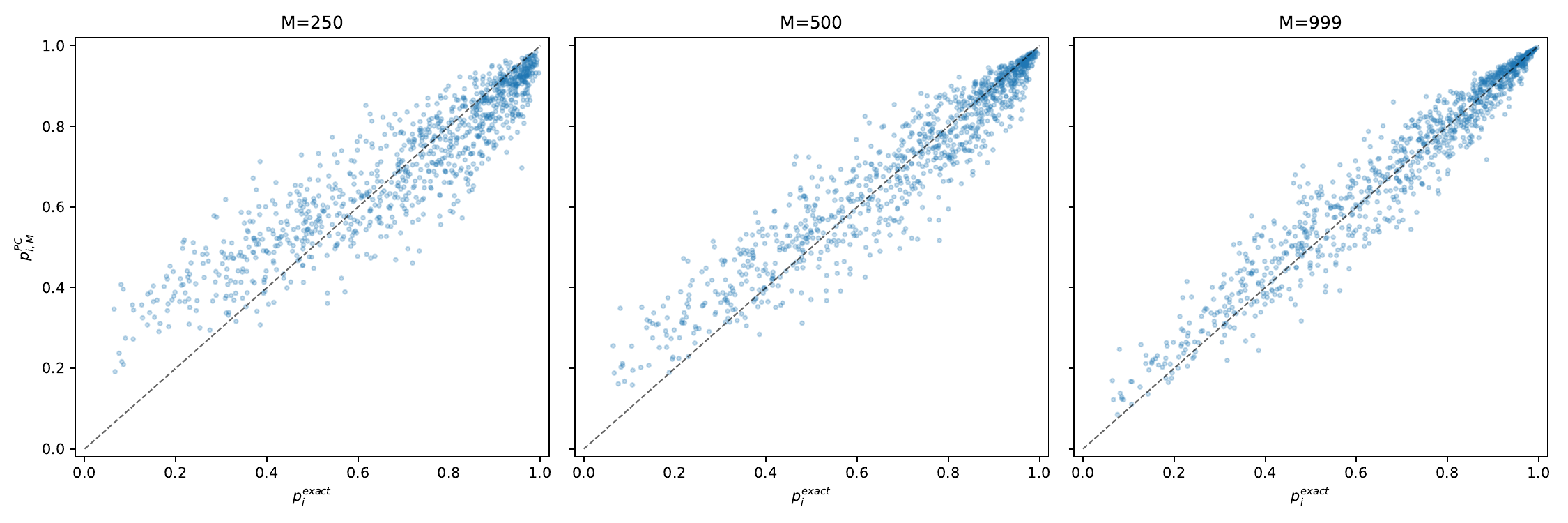}
    \caption{
    PC-LOO versus exact LOO predictive probabilities for the South German
    Credit data at context sizes $M\in\{250,500,999\}$.
    Agreement improves with context size: pointwise log-score correlation
    increases from $0.908$ to $0.968$, while RMSE decreases from $0.250$
    to $0.155$ and MAE from $0.156$ to $0.089$.
    The dashed line denotes exact agreement.
    }
    \label{fig:credit-classification}
\end{figure}

\subsection{Posterior-influence diagnostics}
\label{sec:supp-psis}

We compare the deletion-influence measure $V_i$ with the standard
PSIS Pareto diagnostic $\widehat k_i$ using the 22 conditions from the
deletion-influence experiment. Diagnostics are computed for all 50
observations in each condition, giving 1100 pointwise evaluations,
and for PSIS, we use 2000 draws from $\pi_D$. The corresponding $V_i$ values 
are estimated using $20{,}000$ draws from the exact LOO posterior.
We additionally compare $\widehat k_i$ with the absolute pointwise error
of PSIS-LOO and predictive-context LOO based on the exact Student-$t$
conditional predictive density. The latter uses $S=100$ synthetic
contexts generated from $\pi_D$.

\subsection{Exploratory PC-LOO reliability diagnostic}
\label{sec:supp-pc-diagnostic}

A practical question is whether the two sources of PC-LOO approximation
error can be reflected in a single observation-specific diagnostic.
We explored a Pareto-tail construction for the PC-LOO method. 

For posterior draws $\theta_s\sim\pi_D$, define the usual LOO
importance ratio
$$
W_s^{\mathrm{post}}
=
\frac{1}{p(y_i\mid x_i,\theta_s)}.
$$For a synthetic context $\widetilde D_{A_i}^{(s)}$ generated from
$\pi_D$, let
$g_{i,A_i}^{\pi_D}(\cdot\mid\widetilde D_{A_i}^{(s)})$
denote the exact conditional predictive density and
$q_F(\cdot\mid x_i,\widetilde D_{A_i}^{(s)})$
the learner predictive density. Drawing
$
Y_s\sim
q_F(\cdot\mid x_i,\widetilde D_{A_i}^{(s)}),
$
define
$$
W_s^{F}
=
\frac{
g_{i,A_i}^{\pi_D}
(Y_s\mid\widetilde D_{A_i}^{(s)})
}{
q_F(Y_s\mid x_i,\widetilde D_{A_i}^{(s)})
}.
$$
We then form
$$
W_s^{\mathrm{PC}}
=
W_s^{\mathrm{post}}W_s^{F},
\qquad
\log W_s^{\mathrm{PC}}
=
\log W_s^{\mathrm{post}}+\log W_s^{F},
$$
and estimate the corresponding Pareto shape parameter
$\widehat k_i^{\mathrm{PC}}$ using the standard PSIS tail fit.

We evaluate this construction in the controlled Gaussian experiment
with $n=50$, $M=49$, and $S=2000$. We use 30 independently generated
datasets for each combination
$
\lambda\in\{0,.05,.15,.30,.60\},
\quad
k\in\{0,1,2,4,6\},
$
where $\lambda$ controls learner mismatch and $k$ controls response
contamination. This gives 750 target-level evaluations.

Across this experiment, $\widehat k_i^{\mathrm{PC}}$ is strongly associated 
with absolute pointwise PC-LOO error, see Figure \ref{fig:kpc-diagnostic}.
This result suggests that tail behavior of the combined ratio may contain useful
information about difficult PC-LOO cases. This experiment should nevertheless 
be interpreted as exploratory rather than as establishing a calibrated
diagnostic. In particular, small $\widehat k_i^{\mathrm{PC}}$ does not 
in general guarantee small PC-LOO error. While the construction provides
a useful direction for PC-LOO reliability assessment, its calibration
and interpretation remain open.

\begin{figure}[H]
    \centering
    \includegraphics[width=0.65\linewidth]{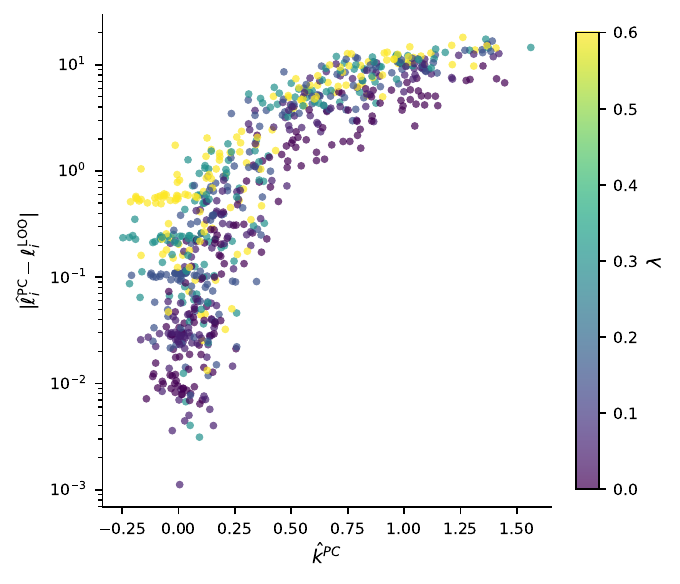}
    \caption{\textbf{Exploratory PC-LOO tail diagnostic.}
    Absolute pointwise PC-LOO error as a function of the estimated
    Pareto shape parameter $\widehat{k}_i^{\mathrm{PC}}$ for the combined
    ratio $W^{\mathrm{PC}}=W^{\mathrm{post}}W^{F}$.
    The strong association is suggestive of diagnostic value, but
    $\widehat{k}_i^{\mathrm{PC}}$ should not be interpreted as a
    calibrated reliability diagnostic.}
    \label{fig:kpc-diagnostic}
\end{figure}

\section{Computational details}
\label{sec:supp-computation}

\paragraph{TabICL predictive densities.}
TabICL returns predictive quantiles rather than an explicit density.
We evaluate 199 quantiles with probability levels from $.005$ to $.995$
and construct a piecewise-constant predictive density between adjacent
quantiles. The resulting density is numerically normalized before
evaluation. For PC-LOO, predictive densities are averaged across
synthetic contexts before taking the logarithm.

\paragraph{TabICL computation and parallelism.}
For TabICL version 2.1.1, learner evaluation becomes the dominant practical cost as
the context size $M$ grows. Our implementation processes synthetic
contexts separately and therefore does not fully exploit their shared
structure. For a fixed target and context set, all $S$ replicates share
the same covariates $X_{A_i}$ and differ only in the synthetic responses
$\widetilde Y_{A_i}^{(s)}$. Joint batching of these response replicates
could therefore reuse computations involving the covariates and reduce
the effective cost across $S$. Most experiments were run on a  
NVIDIA GTX 1080 Ti GPU. 

\paragraph{Numerical divergences.}
KL and squared Hellinger divergences are evaluated numerically on a
common grid of 4000 points covering the relevant predictive
distributions. Both densities are normalized on this grid before
integration, which is performed using the trapezoidal rule.

\paragraph{Implementation.}
All Bayesian linear-regression experiments use analytic
Normal-Inverse-Gamma posterior updating and analytic Student-$t$
predictive distributions.  Runtime measurements include TabICL prediction 
and scoring across all synthetic contexts for a target, but exclude Bayesian
posterior fitting, exact LOO calculations, and computation of $V_i$.

\paragraph{PSIS-LOO.}
PSIS-LOO is computed using the implementation in \texttt{ArviZ}, applying
\texttt{arviz.psislw} to the log importance ratios.
We use independent draws from the exact conjugate posterior and set
\texttt{reff=1.0}. The Pareto-smoothed weights are then used to compute
the pointwise PSIS-LOO predictive densities. We use $4000$ posterior
draws for Stack Loss and $2000$ for California Housing and for the
controlled posterior-influence diagnostics.

\end{document}